\newcommand{\uu}{\boldsymbol{u}}
\newcommand{\mm}{\boldsymbol{m}}
\newcommand{\vv}{\boldsymbol{v}}
\newcommand{\R}{\mathbb{R}}
\newcommand{\xx}{\boldsymbol{x}}
\newcommand{\g}{\mathfrak{g}}
\newcommand{\ssig}{\boldsymbol{\Sigma}}
\newtheorem{theorem}{Theorem}
\newtheorem{proposition}[theorem]{Proposition}%
\newtheorem{problem}{Problem}%
\newtheorem{remark}{Remark}%
\newtheorem{definition}{Definition}
\newtheorem{assumption}{Assumption}
\crefname{problem}{problem}{problems}
\crefname{assumption}{assumption}{assumptions}
\title{Metriplectic formulations of variational thermodynamics}
\author{Valentin Carlier}
\date{\today}
\begin{document}

\maketitle

\begin{abstract}
We propose a metriplectic reformulation of Lagrangian variational formulations for non-equilibrium thermodynamics. 
We prove that solutions to these constrained variational principles can be generated by the sum of a classic Poisson bracket and a metriplectic 4-bracket, that takes the Hamiltonian and the entropy as generators.
We study different cases: simple systems, discrete systems, Euler-Poincaré reduced systems and systems with no symplectic part. 
Several example are shown, including infinite dimensional problems arising from continuum mechanics.
\end{abstract}

\section{Introduction}
Several formalisms have been proposed in the past decades for incorporating dissipative effects in a priori non-dissipative frameworks, such as Lagrangian or Hamiltonian mechanics. 
Many proceed by adding terms to the Poisson bracket, more precisely, considering a second bracket which is held responsible for dissipative effects. 
First attempts were done in \cite{kaufman1984dissipative} which introduces the concept of dissipative bracket, 
\cite{morrison1984bracket} which provides an interpretation of the dissipation as a bracket with an entropy variable and
\cite{morrison1986paradigm} where this second bracket is interpreted as a metric term and introduces the so called "metriplectic systems".
Another formalism called "GENERIC", presented in \cite{grmela1997dynamics}, provides a dissipative bracket accounting for thermodynamic effect.
Recently \cite{morrison2024inclusive}, proposes a unifying framework in which the dissipative effects are described by means of a 4-bracket, that has the same symmetries as a Riemann curvature tensor, leading to interesting geometric interpretations.

On the side of Lagrangian description, attempts were also made to include dissipation mechanisms in variational principles, for example the principle of least dissipation of energy \cite{onsager1931reciprocal}, or the principle of minimum entropy production \cite{prigogine1947etude}. We refer to \cite{gyarmati1970non} for a review of these variational principles. 
We also point to \cite{biot1975virtual} where a principle of virtual dissipation is introduced, that includes several irreversible thermodynamical systems.
Here we are interested with the recent developments done in \cite{gay2017lagrangian1,gay2017lagrangian2}, that present a generalized Lagrange-d'Alembert principle for non equilibrium thermodynamics. 

For non-dissipative systems, it is well known that one can go from a Lagrangian to a Hamiltonian perspective, that is, from a least action principle to a Poisson bracket-Hamiltonian pair, by using a Legendre transform on the Lagrangian to get the Hamiltonian and taking a canonical bracket (see for example \cite{marsden2013introduction}). 
In this article we study the link between the constrained (Lagrangian) variational principle of thermodynamics and the metriplectic framework of dissipative mechanics.
Understanding this link might enable the derivation of more complex models and gain more understanding on the latter, as using a generalized Lagrange-d'Alembert principle is straightforward after identifying an entropy of the system, while metriplectic brackets might be harder to derive but useful for studying the stability and other properties of a system.
Such reformulation might also be interesting in a numerical perspective, as a lot of research has be done in using symplectic/metriplectic \cite{sanz1992symplectic,kraus2017metriplectic} or variational \cite{pavlov2011structure,gawlik2011geometric,gawlik2021variational,carlier2025variational} structure to create structure preserving algorithms. Hence understanding the bridge between formalisms might help in devising new methods.

To do so, we will proceed in a similar way as for non-dissipative systems: given a Lagrangian, we will use a Legendre transform to introduce the canonical momentum and Hamiltonian.
We stress that our formalism is only valid when the Legendre transform is invertible (except for \cref{sec:no_symp}), as is the standard Lagrangian to Hamiltonian transformation.
Therefore we are not able to describe systems such as the ones presented in \cite{eldred2021thermodynamically}.
We will then study the evolution of an observable of the system in the aim to rewrite it using a canonical bracket and a dissipative bracket. The bracket obtained will have a simple structure under the assumption that the friction force is given by a symmetric tensor contracted with the generalized velocity of the system. 
This procedure will be conduce for simple, discrete and Euler-Poincarré type of systems, giving a general metriplectic formulation for dissipative systems built using the generalized Lagrange-d'Alembert principle.
Several examples coming from \cite{gay2017lagrangian1,gay2017lagrangian2} will be presented to illustrate this transformation, in particular we present a metriplectic formulation for simple and multiple pistons problems, as well as for the thermodynamical Navier-Stokes equation and systems with chemical reactions. 

The remainder of this work is organized as follows: in \cref{sec:simple} we present our results for simple systems, then \cref{sec:discrete} shows how to apply the same method for discrete systems. \Cref{sec:EP} presents similar results in the context of reduced Euler-Poincaré systems, while \cref{sec:no_symp} focuses on particular systems with no symplectic part. \Cref{sec:concl} gather some concluding remarks.

\section{Simple systems}
\label{sec:simple}
Our first subject of interest is the variational formulation for simple (mechanic) systems with thermodynamics. 
Following \cite{stueckelberg1974thermocinetique,gay2017lagrangian1}, we define as simple systems, 
systems that can be described by one thermodynamic variable (that can always be chosen to be the entropy  of the system, denoted by $S$) and a finite number of mechanical variables encompassed in a finite dimensional variable $q \in Q$.   
We consider a phase space $Q \subset \R^d$ for simplicity and our system is described by a Lagrangian $L(q, \dot{q},S)$. 
We suppose that the system is subject to a friction force $F^{fr}(q, \dot{q}, S)$, which is responsible for the creation of entropy. 
We will first recall the variational formulation associated with such a system, before exhibiting a metriplectic 4-bracket which gives the same dynamic (considering canonical Hamiltonian and Poisson bracket for the symplectic part).
\subsection{Variational formulation}
The variational formulation for simple thermodynamic systems is given by: 
\begin{problem}[Variational formulation for simple thermodynamic system]
\label{pbm:variational_form_simple}
Find a curve $(q(t), S(t)) \in Q \times \R$ that satisfies the variational condition
\begin{equation}
\label{eqn:lagrangian_simple}
\delta \int_0^{T_f} L(q, \dot{q}, S) dt = 0 ~,
\end{equation}
with variations subject to the following variational constraint
\begin{equation}
\label{eqn:var_constraint_simple}
\frac{\partial L}{\partial S} \delta S = \langle F^{fr}(q,\dot{q},S), \delta q \rangle ~,
\end{equation}
and the phenomenological constraint
\begin{equation}
\label{eqn:pheno_constraint_simple}
\frac{\partial L}{\partial S} \dot{S} = \langle F^{fr}(q,\dot{q},S), \dot{q} \rangle ~.
\end{equation}
\end{problem}
Such a constraint is a priori non-holonomic as in all generality we can not guarantee the existence of a submanifold $N$ of $Q \times \R$ such that this constraint is equivalent of being restricted to $N$.
The duality product $\langle \cdot , \cdot \rangle$ is the standard dot product of $\R^d$.
The solution curves to this problem, obtained in the standard way by computing the variation and integrating by part the term in $\dot{\delta q}$ (see \cite{gay2017lagrangian1} for detailed computations), satisfy: 
\begin{subequations}
\label{eqn:EL_simple}
\begin{equation}
\label{eqn:EL_simple_dot_q}
\frac{d}{dt} \frac{\partial L}{\partial \dot{q}} - \frac{\partial L}{\partial q} = F^{fr}(q, \dot{q}, S) ~,
\end{equation}
\begin{equation}
\label{eqn:EL_simple_dot_s}
\frac{\partial L}{\partial S} \dot{S} = \langle F^{fr}(q,\dot{q},S), \dot{q} \rangle ~.
\end{equation}
\end{subequations}
The first equation is similar to the standard Euler-Lagrange equations given by critical curves of a Lagrangian, adding the dissipative force (which is not a force derivated from a potential) to this "momentum" equation. The second equation is \cref{eqn:pheno_constraint_simple} and encodes entropy dissipation.
\subsection{A first metriplectic formulation}
\label{sec:1st_metri_simple}
Let us first introduce the canonical momentum and Hamiltonian, in the same way as in the standard non-dissipative set-up \cite{marsden2013introduction}: 
\begin{definition}
\label{def:mom_Ham_simple}
Given a Lagrangian $L(q, \dot{q}, S)$, the canonical momentum is defined as 
\begin{equation}
\label{eqn:mom_simple}
p = \frac{\partial L}{\partial \dot{q}} ~,
\end{equation}
and the Hamiltonian (or energy) of the system as 
\begin{equation}
\label{eqn:ham_simple}
H(q,p,S) = \langle p, \dot{q} \rangle - L ~.
\end{equation}
\end{definition}
For the remainder of this work, except in \cref{sec:no_symp} we will assumer that the Legendre transform $(q, \dot{q}) \mapsto (q, p)$ is a diffeomorphism.
\begin{proposition}
The equations of motion in the $(q,p,S)$ variables read: 
\begin{subequations}
\label{eqn:motion_ham_simple}
\begin{equation}
\dot{q} = \frac{\partial H}{\partial p} ~,
\end{equation}
\begin{equation}
\dot{p} = -\frac{\partial H}{\partial q} + F^{fr}(q,\dot{q},S)~,
\end{equation}
\begin{equation}
\frac{\partial H}{\partial S}\dot{S} = - \langle F^{fr}(q,\dot{q},S), \dot{q} \rangle ~.
\end{equation}
\end{subequations}
\end{proposition}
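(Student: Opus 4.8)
The plan is to carry out the classical Legendre-transform computation, now keeping the entropy $S$ as a passive parameter. Since we assume the Legendre transform $(q,\dot q)\mapsto(q,p)$ is a diffeomorphism at fixed $S$, we may regard $\dot q=\dot q(q,p,S)$ as a smooth function and differentiate $H(q,p,S)=\langle p,\dot q\rangle-L(q,\dot q,S)$ accordingly, via \cref{eqn:ham_simple}.

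First I would differentiate $H$ with respect to $p$: by the chain rule all the terms containing the Jacobian $\partial\dot q/\partial p$ collect into the contraction of $p-\partial L/\partial\dot q$ against that Jacobian, which vanishes by the definition \cref{eqn:mom_simple} of $p$, leaving $\partial H/\partial p=\dot q$. This is exactly the first equation of \cref{eqn:motion_ham_simple}. The same cancellation occurs when differentiating with respect to $q$ and with respect to $S$, since in each case the $\dot q$-dependent terms regroup as $p-\partial L/\partial\dot q$ contracted with a Jacobian; this yields the two identities $\partial H/\partial q=-\partial L/\partial q$ and $\partial H/\partial S=-\partial L/\partial S$.

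Then I would feed these identities into the Euler--Lagrange system \cref{eqn:EL_simple}. From \cref{eqn:EL_simple_dot_q} together with the definition of $p$, $\dot p=\tfrac{d}{dt}\tfrac{\partial L}{\partial\dot q}=\tfrac{\partial L}{\partial q}+F^{fr}=-\tfrac{\partial H}{\partial q}+F^{fr}$, which is the second equation. For the third equation, multiply the identity $\partial H/\partial S=-\partial L/\partial S$ by $\dot S$ and invoke \cref{eqn:EL_simple_dot_s}: $\tfrac{\partial H}{\partial S}\dot S=-\tfrac{\partial L}{\partial S}\dot S=-\langle F^{fr},\dot q\rangle$.

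I expect no genuine obstacle here: the computation is routine, the only point worth stressing being that invertibility of the Legendre transform is precisely what licenses treating $\dot q$ as a function of $(q,p,S)$, so that the partial derivatives of $H$ entering \cref{eqn:motion_ham_simple} are well-defined. It may be worth a short remark that the third equation is just the internal-energy balance written intrinsically, since $\partial H/\partial S$ plays the role of temperature once $L$ is split into a kinetic part minus an internal energy.
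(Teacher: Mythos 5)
Your proposal is correct and matches the paper's proof, which simply states that the system is a rewriting of \cref{eqn:EL_simple} under the change of variables $(q,\dot q,S)\mapsto(q,p,S)$; you have filled in the standard Legendre-transform identities $\partial H/\partial p=\dot q$, $\partial H/\partial q=-\partial L/\partial q$, $\partial H/\partial S=-\partial L/\partial S$ that make that rewriting explicit. No gaps.
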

\begin{proof}
This is a simple rewriting of \cref{eqn:EL_simple} using the change of variables $(q, \dot{q}, S) \rightarrow (q, p, S)$ 
\end{proof}
We remark that the first equation is standard in Hamiltonian dynamics, the second one consists of a standard term plus the friction force (this is a second Newton law, with a potential force in $\frac{\partial H}{\partial q}$ and a dissipative force $F^{fr}(q,\dot{q},S)$). The final equation is again the entropy dissipation. Written in this form the latter can be seen as an equation added to restore energy (Hamiltonian) conservation.

Our aim is to write the evolution of any function of the dynamic as a metriplectic system:

\begin{definition}[Metriplectic system]
\label{def:metriplectic}
Consider a symplectic 2-bracket $\{ \cdot , \cdot \}$ (that is antisymmetric, bilinear and satisfying the Jacobi identity),
and a metric 4-bracket $(\cdot,\cdot;\cdot,\cdot)$ with the following identity $(F,G;M,N) = -(G,F;M,N) = -(F,G;N,M) = (M,N;F,G)$ \cite{morrison2024inclusive}, both taking as argument functions of the dynamic.
Consider also two function of the dynamic: the Hamiltonian $H$ and the entropy $S$, that have the property that $\{F,S\}=0$ for all $F$.
The dynamic is a metriplectic system if for all function of the system $F$, we have $\dot{F} = \{F,H\} + (F,H;S,H)$
\end{definition}

\begin{remark}
If the dynamic satisfies $\dot{F} = \{F,H\}$ we have a standard Hamiltonian system, and the entropy is a Casimir of the system. The metric 4-bracket is responsible for the dissipation.
\end{remark}

Let consider a function of the dynamic $F(q(t),p(t),S(t))$ and we conduct the following computation, where we denote $T = \frac{\partial H}{\partial S}$ and $K = -\langle F^{fr}, \frac{\partial H}{\partial p}\rangle$ the temperature of the system and the work of the dissipative force: 
\begin{align*}
\dot{F} &= \langle \frac{\partial F}{\partial q} , \dot{q} \rangle + \langle \frac{\partial F}{\partial p}, \dot{p} \rangle + \frac{\partial F}{\partial S} \dot{S} \\
	&= \langle \frac{\partial F}{\partial q}, \frac{\partial H}{\partial p} \rangle - \langle \frac{\partial F}{\partial p}, \frac{\partial H}{\partial q} \rangle + \langle F^{fr}, \frac{\partial F}{\partial p}\rangle - \frac{\partial F}{\partial S} \frac{\langle F^{fr}, \frac{\partial H}{\partial p} \rangle}{T} \\
	&= \{F,H\} + \frac{1}{T}\left(\langle F^{fr}, \frac{\partial F}{\partial p}\rangle \frac{\partial H}{\partial S} - \langle F^{fr}, \frac{\partial H}{\partial p} \rangle \frac{\partial F}{\partial S}\right) \\
	&= \{F,H\} + \frac{1}{T K}\left(\langle F^{fr}, \frac{\partial F}{\partial p}\rangle \frac{\partial H}{\partial S} - \langle F^{fr}, \frac{\partial H}{\partial p} \rangle \frac{\partial F}{\partial S}\right)\left(\langle F^{fr}, \frac{\partial S}{\partial p}\rangle \frac{\partial H}{\partial S} - \langle F^{fr}, \frac{\partial H}{\partial p} \rangle \frac{\partial S}{\partial S}\right) \\
	&= \{F,H\} + (F,H;S,H) ~,
\end{align*}
with
\begin{subequations}
\label{eqn:metriplectic_brackets_simple}
\begin{equation}
\label{eqn:Poisson-bracket}
\{F,G\} = \langle \frac{\partial F}{\partial q}, \frac{\partial G}{\partial p} \rangle - \langle \frac{\partial F}{\partial p}, \frac{\partial G}{\partial q} \rangle ~,
\end{equation}
\begin{equation}
\label{eqn:metric-bracket}
(F,G;M,N) = \frac{1}{T K}\left(\langle F^{fr}, \frac{\partial F}{\partial p}\rangle \frac{\partial G}{\partial S} - \langle F^{fr}, \frac{\partial G}{\partial p} \rangle \frac{\partial F}{\partial S}\right)\left(\langle F^{fr}, \frac{\partial M}{\partial p}\rangle \frac{\partial N}{\partial S} - \langle F^{fr}, \frac{\partial N}{\partial p} \rangle \frac{\partial M}{\partial S}\right) ~.
\end{equation}
\end{subequations}
Where we have used that $\frac{\partial S}{\partial p} = 0$ and $\frac{\partial S}{\partial S} = 1$. The bracket given by \cref{eqn:metric-bracket} satisfies all the symmetry conditions of a metriplectic 4-bracket. We reformulate our result in the following theorem:
\begin{theorem}
A curve $(q(t),S(t)) \in Q \times \R$ is solution to \cref{pbm:variational_form_simple} if and only if, doing the change of variables $(q, \dot{q}, S) \mapsto (q, p, S)$, it is a metriplectic system \ref{def:metriplectic} with the brackets defined by \cref{eqn:metriplectic_brackets_simple} and the Hamiltonian as in \cref{eqn:ham_simple} and entropy $S$.
\end{theorem}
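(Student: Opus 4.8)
The plan is to route the equivalence through the Hamiltonian system \cref{eqn:motion_ham_simple}. By the Proposition, that system is equivalent, under the diffeomorphism $(q,\dot q)\mapsto(q,p)$, to the Euler--Lagrange system \cref{eqn:EL_simple}; and \cref{eqn:EL_simple} is in turn equivalent to being a solution of \cref{pbm:variational_form_simple}: the implication ``solution of \cref{pbm:variational_form_simple} $\Rightarrow$ \cref{eqn:EL_simple}'' is recalled above, and the converse follows by running the same variational computation backwards (substituting \cref{eqn:EL_simple_dot_q} into $\delta\int L\,dt$, discarding the boundary term for fixed-endpoint variations, using the variational constraint \cref{eqn:var_constraint_simple} to make the remaining integrand vanish, while the phenomenological constraint is exactly \cref{eqn:EL_simple_dot_s}). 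It therefore suffices to prove that a curve $t\mapsto(q(t),p(t),S(t))$ satisfies \cref{eqn:motion_ham_simple} if and only if $\dot F=\{F,H\}+(F,H;S,H)$ for every smooth function $F$, with the brackets \cref{eqn:metriplectic_brackets_simple}.

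The forward direction is precisely the chain of equalities displayed just before the statement: one differentiates $F(q(t),p(t),S(t))$ along the curve, substitutes \cref{eqn:motion_ham_simple}, and recognizes the result --- after multiplying and dividing by $K=-\langle F^{fr},\partial H/\partial p\rangle$ and using $\partial S/\partial p=0$, $\partial S/\partial S=1$ --- as $\{F,H\}+(F,H;S,H)$. For the converse I would evaluate the identity on the coordinate functions. With $F=q^i$ the 4-bracket term vanishes (its first factor carries $\partial q^i/\partial p=0$ and $\partial q^i/\partial S=0$), leaving $\dot q^i=\{q^i,H\}=\partial H/\partial p_i$. With $F=p_i$ one has $\{p_i,H\}=-\partial H/\partial q^i$ and $(p_i,H;S,H)=\frac1{TK}(T\langle F^{fr},e_i\rangle)\,K=\langle F^{fr},e_i\rangle$, giving the momentum equation. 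With $F=S$ one has $\{S,H\}=0$ and $(S,H;S,H)=\frac1{TK}\,K\cdot K=K/T$, hence $T\dot S=K=-\langle F^{fr},\partial H/\partial p\rangle=-\langle F^{fr},\dot q\rangle$, which is the entropy equation. So the two formulations coincide, and applying the inverse Legendre transform returns the statement in the original variables.

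It remains to check that the data genuinely fit \cref{def:metriplectic}. The canonical bracket \cref{eqn:Poisson-bracket} is bilinear, antisymmetric and satisfies the Jacobi identity (classical), and $\{F,S\}=0$ for all $F$ since $S$, viewed as a function of $(q,p,S)$, does not depend on $q$ or $p$. The 4-bracket \cref{eqn:metric-bracket} has the shape $\frac1{TK}\,A(F,G)\,A(M,N)$ with $A(F,G)=\langle F^{fr},\partial F/\partial p\rangle\,\partial G/\partial S-\langle F^{fr},\partial G/\partial p\rangle\,\partial F/\partial S$ manifestly antisymmetric in $(F,G)$; antisymmetry in $(F,G)$ and in $(M,N)$, together with symmetry under swapping the two pairs, then follow at once --- these are exactly the identities demanded in \cref{def:metriplectic}.

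The one genuine subtlety, which I would flag rather than hide, is the prefactor $1/(TK)$ in \cref{eqn:metric-bracket}: the 4-bracket, and with it the asserted equivalence, is only defined along the portion of the curve where the temperature $T=\partial H/\partial S$ and the dissipative power $K=-\langle F^{fr},\partial H/\partial p\rangle$ are both nonzero --- the same division by $K$ already performed in the computation above. Where $K=0$ the motion is instantaneously purely Hamiltonian and no dissipation occurs, so nothing is lost dynamically, but the expression degenerates; I would therefore state the theorem under the standing hypothesis $TK\neq0$, or restrict attention to that region.
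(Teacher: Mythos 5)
Your proof is correct and follows essentially the same route as the paper: the forward direction is the paper's own chain of equalities (multiplying and dividing by $K$), combined with the Proposition reducing \cref{pbm:variational_form_simple} to \cref{eqn:motion_ham_simple}. The only additions are the explicit converse via coordinate functions and the caveat that the bracket degenerates when $TK=0$ --- a point the paper concedes only at the start of \cref{sec:2nd_metri_simple} as the motivation for the second formulation, so flagging it here is appropriate rather than a deviation.
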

\begin{remark}
If $F^{fr}$ is really dissipative, then $K>0$ and we have entropy dissipation at rate $\frac{K}{T}$.
\end{remark}
\subsection{A second metriplectic formulation}
\label{sec:2nd_metri_simple}
The formulation given in the previous section has one major drawback: the bracket is ill defined if $K=0$. This situation could arise if one uses this framework to model a non-conservative force that is orthogonal to the motion, or if at some point the system stop moving (for example around equilibrium points). We here provide another formulation that overcomes this flaw, but rely on an assumption on the structure of the dissipative force. This second formulation has also the advantage of providing a bracket that has a Kulkarni-Nomizu product structure and is therefore more susceptible to have a geometric interpretation.

\begin{assumption}
\label{assump:structure_dissipation}
For the remainder of this work we assume that dissipation forces can be written as
\begin{equation}
\label{eqn:structure_dissipation}
F^{fr} = - \Lambda(q,\dot{q},S) \dot{q} ~, ~ \Lambda \text{ symmetric matrix.}
\end{equation}
\end{assumption}

This hypothesis is not very restrictive: such a symmetric matrix will always exist as long as $F^{fr}$ is zero when $\dot{q}$ is. It has the physical meaning of assuming that the friction force cancels out when the system is not in motion. Moreover in a lot of physically relevant systems, the dissipation is most commonly given in this form (is the case of all the examples presented in \cite{gay2017lagrangian1,gay2017lagrangian2}).

We consider a functional $F$ and resume our calculation from the previous subsection, using our additional \cref{assump:structure_dissipation} (remind that $T=\frac{\partial H}{\partial S}$):
\begin{align*}
\dot{F} &= \langle \frac{\partial F}{\partial q} , \dot{q} \rangle + \langle \frac{\partial F}{\partial p}, \dot{p} \rangle + \frac{\partial F}{\partial S} \dot{S} \\
	&= \langle \frac{\partial F}{\partial q}, \frac{\partial H}{\partial p} \rangle - \langle \frac{\partial F}{\partial p}, \frac{\partial H}{\partial q} \rangle + \langle F^{fr}, \frac{\partial F}{\partial p}\rangle - \frac{\partial F}{\partial S} \frac{\langle F^{fr}, \frac{\partial H}{\partial p} \rangle}{T} \\
	&= \{F,H\} + \frac{1}{T}\left(\langle \frac{\partial H}{\partial p},\Lambda \frac{\partial H}{\partial p} \rangle \frac{\partial F}{\partial S} - \langle \frac{\partial H}{\partial p},\Lambda \frac{\partial F}{\partial p}\rangle \frac{\partial H}{\partial S}\right) \\
	&= \{F,H\} + \frac{1}{T}\left(
	\begin{array}{cr}
	\langle \frac{\partial H}{\partial p},\Lambda \frac{\partial H}{\partial p} \rangle \frac{\partial F}{\partial S}\frac{\partial S}{\partial S} 
	-\langle \frac{\partial H}{\partial p},\Lambda \frac{\partial F}{\partial p}\rangle \frac{\partial H}{\partial S}\frac{\partial S}{\partial S}
	+\langle \frac{\partial S}{\partial p},\Lambda \frac{\partial F}{\partial p}\rangle \frac{\partial H}{\partial S}\frac{\partial H}{\partial S} 
	- \langle \frac{\partial S}{\partial p},\Lambda \frac{\partial H}{\partial p} \rangle \frac{\partial F}{\partial S}\frac{\partial H}{\partial S}
	\end{array}
	\right) \\
	&= \{F,H\} + (F,H;S,H)
\end{align*}
With now
\begin{equation}
\label{eqn:metric_bracket_sym}
(F,G;M,N) = \frac{1}{T} \left(
\begin{array}{cr}
 \langle \frac{\partial N}{\partial p},\Lambda \frac{\partial G}{\partial p} \rangle \frac{\partial F}{\partial S}\frac{\partial M}{\partial S} 
 -\langle \frac{\partial N}{\partial p},\Lambda \frac{\partial F}{\partial p}\rangle \frac{\partial G}{\partial S}\frac{\partial M}{\partial S}
 +\langle \frac{\partial M}{\partial p},\Lambda \frac{\partial F}{\partial p}\rangle \frac{\partial G}{\partial S}\frac{\partial N}{\partial S} 
 - \langle \frac{\partial M}{\partial p},\Lambda \frac{\partial G}{\partial p} \rangle \frac{\partial F}{\partial S}\frac{\partial N}{\partial S}
 \end{array}
 \right)
\end{equation}
We remark that this 4-bracket has a Kulkarni-Nomizu product \cite{kulkarni1972bianchi,morrison2024inclusive} structure between the 2-covariant tensors given by $d_p f \frac{\Lambda}{T} d_p g$ and $d_s f \cdot d_s g$.
The Kulkarni-Nomizu product of 2 symmetric 2-bracket $a(\cdot, \cdot)$ and $b(\cdot, \cdot)$ is given by 
\begin{equation}
\label{eqn:KNproduct}
a \owedge b(F,G;M,N) = a(F,M)b(G,N) - a(F,N)b(G,M) + a(G,N)b(F,M) - a(G,M)b(F,N) ~.
\end{equation}
We recover the dissipation rate for entropy $ \frac{1}{T}\langle \frac{\partial H}{\partial p},\Lambda \frac{\partial H}{\partial p} \rangle$, which is positive if we impose the condition that $\Lambda$ is definite positive. Let us reformulate this result in a theorem: 
\begin{theorem}
A curve $(q(t),S(t)) \in Q \times \R$ is solution to \cref{pbm:variational_form_simple} if and only if, doing the change of variables $(q, \dot{q}, S) \mapsto (q, p, S)$, it is a metriplectic system \ref{def:metriplectic} with the brackets defined by \cref{eqn:Poisson-bracket,eqn:metric_bracket_sym}  and the Hamiltonian as in \cref{eqn:ham_simple}.
\end{theorem}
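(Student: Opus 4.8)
The claim is an equivalence, and the plan is to route both directions through the Hamiltonian system \cref{eqn:motion_ham_simple}: the proposition above already records that, under the diffeomorphic Legendre change of variables $(q,\dot q,S)\mapsto(q,p,S)$, a curve solves \cref{pbm:variational_form_simple} (equivalently satisfies \cref{eqn:EL_simple}, by \cite{gay2017lagrangian1}) if and only if it satisfies \cref{eqn:motion_ham_simple}. So it suffices to (i) check that \cref{eqn:Poisson-bracket,eqn:metric_bracket_sym} really form a metriplectic pair in the sense of \cref{def:metriplectic}, (ii) show that a solution of \cref{eqn:motion_ham_simple} satisfies $\dot F=\{F,H\}+(F,H;S,H)$ for all observables $F$, and (iii) show the converse implication.

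For (i): the symplectic requirement $\{F,S\}=0$ for all $F$ is immediate from \cref{eqn:Poisson-bracket} since $S$ depends on neither $q$ nor $p$. For the $4$-bracket, I would use the observation flagged after \cref{eqn:metric_bracket_sym}, namely that \cref{eqn:metric_bracket_sym} is the Kulkarni--Nomizu product $a\owedge b$ (see \cref{eqn:KNproduct}) of the two \emph{symmetric} bilinear forms $a(F,G)=\frac1T\langle\partial_p F,\Lambda\partial_p G\rangle$ and $b(F,G)=\partial_S F\,\partial_S G$; symmetry of $a$ uses symmetry of $\Lambda$ from \cref{assump:structure_dissipation}, and symmetry of $b$ is obvious. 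One then matches the four terms of \cref{eqn:KNproduct} against \cref{eqn:metric_bracket_sym}, and invokes (or checks directly) that the Kulkarni--Nomizu product of two symmetric $2$-tensors automatically satisfies the Riemann-tensor symmetries $(F,G;M,N)=-(G,F;M,N)=-(F,G;N,M)=(M,N;F,G)$. Here one should be mindful that $T=\partial_S H$ and $\Lambda=\Lambda(q,\dot q,S)$ are treated as coefficient functions on phase space (via $\dot q=\partial_p H$), so the symmetries hold pointwise along the curve, and that this formulation tacitly requires $T\neq0$.

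For (ii): given a solution of \cref{eqn:motion_ham_simple}, substitute $\dot q,\dot p,\dot S$ into $\dot F=\langle\partial_q F,\dot q\rangle+\langle\partial_p F,\dot p\rangle+\partial_S F\,\dot S$, insert $F^{fr}=-\Lambda\dot q=-\Lambda\partial_p H$ from \cref{assump:structure_dissipation}, and regroup exactly as in the displayed chain of equalities preceding \cref{eqn:metric_bracket_sym}. The two terms involving $\partial_p S$ and the factor $\partial_S S$ can be adjoined for free, since $\partial_p S=0$ and $\partial_S S=1$, which completes the expression into the Kulkarni--Nomizu form; this yields $\dot F=\{F,H\}+(F,H;S,H)$ as required.

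For (iii): conversely, assuming $\dot F=\{F,H\}+(F,H;S,H)$ for all $F$, test against the coordinate functions. With $F=q_j$ every term of the $4$-bracket vanishes (each carries $\partial_p q_j=0$ or $\partial_S q_j=0$), so $\dot q_j=\partial H/\partial p_j$. With $F=p_j$ the $4$-bracket reduces to the one surviving term $-\langle\partial_p H,\Lambda e_j\rangle=-(\Lambda\partial_p H)_j=(F^{fr})_j$ (using symmetry of $\Lambda$ and the $\dot q=\partial_p H$ just obtained), so $\dot p_j=-\partial H/\partial q_j+F^{fr}_j$. With $F=S$ the symplectic bracket vanishes and the $4$-bracket reduces to $\frac1T\langle\partial_p H,\Lambda\partial_p H\rangle=-\frac1T\langle F^{fr},\dot q\rangle$, so $T\dot S=-\langle F^{fr},\dot q\rangle$. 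These are exactly \cref{eqn:motion_ham_simple}, and pulling back through the Legendre diffeomorphism recovers \cref{eqn:EL_simple}, hence a solution of \cref{pbm:variational_form_simple}. I expect the main obstacle here to be the soft points rather than the algebra: one must make precise that ``solving \cref{pbm:variational_form_simple}'' is equivalent to \cref{eqn:EL_simple} --- the variational constraint \cref{eqn:var_constraint_simple} entering the derivation of the momentum equation rather than as an extra restriction on the solution curve, a fact imported from \cite{gay2017lagrangian1} --- and that the entropy equation $T\dot S=-\langle F^{fr},\dot q\rangle$ read off the $4$-bracket genuinely coincides with the phenomenological constraint \cref{eqn:pheno_constraint_simple}, which it does precisely because $\partial_S H=-\partial_S L$ together with \cref{assump:structure_dissipation} makes $\langle F^{fr},\dot q\rangle=-\langle\dot q,\Lambda\dot q\rangle$ match the dissipation rate $\frac1T\langle\partial_p H,\Lambda\partial_p H\rangle$ appearing in the bracket.
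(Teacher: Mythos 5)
Your proposal is correct and follows essentially the same route as the paper: the forward direction is exactly the displayed chain of equalities preceding \cref{eqn:metric_bracket_sym} (substitute \cref{eqn:motion_ham_simple}, insert $F^{fr}=-\Lambda\partial_p H$, adjoin the vanishing $\partial_p S$ and unit $\partial_S S$ terms to reach the Kulkarni--Nomizu form). Your items (i) and (iii) --- verifying the bracket axioms via the symmetric-tensor Kulkarni--Nomizu product and recovering \cref{eqn:motion_ham_simple} by testing against the coordinate functions $q_j$, $p_j$, $S$ --- merely make explicit what the paper asserts without detail, and your computations there check out.
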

\begin{remark}
Those two variational formulation are valid under the assumption that the Legendre transform $(q, \dot{q}) \mapsto (q, p)$ is well defined and a diffeomorphism. We are therefore not able to treat systems such as chemical reaction as defined in sec 3.3 of \cite{gay2017lagrangian1} where the Lagrangian is independent of $\dot{q}$, for a treatment of such systems, see \cref{sec:no_symp}. However our construction is still valid for all the problems where the Lagrangian has the particular form $L = K-U$ where $K$ is the kinetic energy and $U$ the potential energy as most of the systems coming from the study of fluids and plasmas. 
\end{remark}

We can compare our result with the previously obtained dissipative brackets. In \cite{gay2020variational} the authors give a reformulation of variational thermodynamics using single and double generator as well as 2-metriplectic bracket.
The 2-metriplectic bracket is obtained directly from the 4-metriplectic bracket as $(F,G) = (F,H;G,H)$.

In \cite{gay2020variational}, the authors had already identified the 2-bracket 
\begin{equation}
\label{eqn:metric_2_sym}
(F,G) = \frac{1}{T} \left(
\begin{array}{cr}
 \langle \frac{\partial H}{\partial p},\Lambda \frac{\partial H}{\partial p} \rangle \frac{\partial F}{\partial S}\frac{\partial G}{\partial S} 
 -\langle \frac{\partial H}{\partial p},\Lambda \frac{\partial F}{\partial p}\rangle \frac{\partial H}{\partial S}\frac{\partial G}{\partial S}
 +\langle \frac{\partial G}{\partial p},\Lambda \frac{\partial F}{\partial p}\rangle \frac{\partial H}{\partial S}\frac{\partial H}{\partial S} 
 - \langle \frac{\partial G}{\partial p},\Lambda \frac{\partial H}{\partial p} \rangle \frac{\partial F}{\partial S}\frac{\partial H}{\partial S}
 \end{array} ~,
 \right)
\end{equation}
which matches the reduction of the 4-bracket we derive here to a standard 2-bracket.

\subsection{Example: one piston problem}
We consider a system consisting of a piston in a cylinder which is occupied by a perfect gas. 
The system is described by $x$ the position of the piston ($x=0$ meaning the piston is in contact with the bottom of the cylinder) and $S$, the entropy of the gas in the cylinder, assuming the homogeneity inside the cylinder. 
We refer to \cite{gay2017lagrangian1} for a more detailed description of the problem. The Lagrangian is given by 
\begin{equation}
\label{eqn:Lagrangian_piston}
L(x, \dot{x}, S) = \frac{1}{2}m \dot{x}^2 - U(x,S) ~,
\end{equation}
where $m$ is the mass of the piston and $U$ its internal energy. The friction force is explicitly given by 
\begin{equation}
\label{eqn:Dissip_piston}
F^{fr} = - \lambda(x,S) \dot{x} \text{ with } \lambda \geq 0 ~,
\end{equation}
which has the form of \cref{eqn:structure_dissipation}. Following our computations from \cref{sec:1st_metri_simple,sec:2nd_metri_simple}, we define the momentum, Hamiltonian and brackets
\begin{subequations}
\label{eqn:mom_Ham_brack_piston}
\begin{equation}
p = m\dot{x} ~,
\end{equation}
\begin{equation}
H = \frac{p^2}{2m} + U(x,S) ~,
\end{equation}
\begin{equation}
\{F,G\} = \langle \frac{\partial F}{\partial x}, \frac{\partial G}{\partial p} \rangle - \langle \frac{\partial F}{\partial p}, \frac{\partial G}{\partial x} \rangle ~,
\end{equation}
\begin{equation}
(F,G;M,N) = \frac{\lambda}{T}\left(\frac{\partial N}{\partial p} \frac{\partial G}{\partial p} \frac{\partial F}{\partial S}\frac{\partial M}{\partial S} 
	-\frac{\partial N}{\partial p} \frac{\partial F}{\partial p} \frac{\partial G}{\partial S}\frac{\partial M}{\partial S}
	+ \frac{\partial M}{\partial p} \frac{\partial F}{\partial p} \frac{\partial G}{\partial S}\frac{\partial N}{\partial S} 
	- \frac{\partial M}{\partial p} \frac{\partial G}{\partial p} \frac{\partial F}{\partial S}\frac{\partial N}{\partial S}
	\right) ~.
\end{equation}
\end{subequations}
And the dynamic of the system is described by $\dot{F}= \{F,H\} + (F,H;S,H)$. Inserting for $F$ the functions $p$ and $S$ we find 
\begin{equation}
\label{eqn:dynamic_piston}
\dot{p}=-\frac{\partial U}{\partial x} -\lambda \frac{p}{m} \qquad \dot{S} = \frac{\lambda}{T} \frac{p^2}{m^2}
\end{equation}
which correspond to the equation of the piston as given in \cite{gruber1999thermodynamics}
\section{Discrete systems}
\label{sec:discrete}
We now turn to discrete systems, that is systems composed of $N$ simple systems interacting. Such a system can be described by the set of all the mechanical variables in a phase space $Q$ (we keep $Q \subset \R^d$ for simplicity) and a set of entropies $(S_i)_{i=1,...,N}$ (one per system). The evolution of the system is described by a Lagrangian $L(q, \dot{q}, (S_i))$. Every system has a friction force $F^{fr(i)}$ and heat exchanges within the systems are modelled via coefficients $\kappa_{ij} \geq 0$, symmetric, such that the power of the heat transfer from system $j$ to system $i$ is $\kappa_{ij}(T_j-T_i)$, where $T_i=\frac{\partial H}{\partial S_i}$.
\subsection{Variational principle}
The variational formulation for a discrete thermodynamical system is given by \cite{gay2017lagrangian1} 
\begin{problem}[Variational formulation for discrete thermodynamic systems]
\label{pbm:variational_form_discrete}
Find a curve $(q(t), S_i(t), \Sigma_i(t), \Gamma_i(t)) \in Q \times \R^N \times \R^N \times \R^N$ that satisfies the variational condition
\begin{equation}
\label{eqn:lagrangian_discrete}
\delta \int_0^{T_f} L(q, \dot{q}, S_1,...,S_N) + \sum\limits_{i=1}^N (S_i-\Sigma_i)\dot{\Gamma}^i dt = 0 ~,
\end{equation}
with variations subject to the following variational constraints
\begin{equation}
\label{eqn:var_constraint_discrete}
\frac{\partial L}{\partial S_i} \delta \Sigma_i = \langle F^{fr(i)}(...), \delta q \rangle + \sum\limits_{j=1}^N J_j^i(...) \delta \Gamma^j~,
\end{equation}
and phenomenological constraints
\begin{equation}
\label{eqn:pheno_constraint_discrete}
\frac{\partial L}{\partial S_i} \dot{\Sigma_i} = \langle F^{fr(i)}(...), \dot{q} \rangle + \sum\limits_{j=1}^N J_j^i(...) \dot{\Gamma^j}~.
\end{equation}
\end{problem}
Where $J_j^i$ is the "friction force" associated with $\Sigma^i$ and is given by $J_j^i = - \kappa_{ij} + \delta_{ij} \sum\limits_{k=1}^N \kappa_{ik}$. 
The additional variables $\Gamma_i$ are interpreted as thermal displacement, while the $\Sigma_i$ correspond to the total entropy production (which in this case is equal to the entropy production).
The solution curves to this problem are given by 
\begin{subequations}
\label{eqn:EL_discrete}
\begin{equation}
\label{eqn:EL_discrete_dot_q}
\frac{d}{dt} \frac{\partial L}{\partial \dot{q}} - \frac{\partial L}{\partial q} = \sum\limits_{i=1}^NF^{fr(i)} ~,
\end{equation}
\begin{equation}
\label{eqn:EL_discrete_dot_sigma}
\frac{\partial L}{\partial S_i} \dot{\Sigma_i} = \langle F^{fr(i)}, \dot{q} \rangle + \sum\limits_{j=1}^N J_j^i \dot{\Gamma_j} ~,
\end{equation}
\begin{equation}
\label{eqn:EL_discrete_dot_s}
\dot{S_i}=\dot{\Sigma_i} ~,
\end{equation}
\begin{equation}
\label{eqn:EL_discrete_dot_gamma}
\dot{\Gamma_i}=- \frac{\partial L}{\partial S_i} ~.
\end{equation}
\end{subequations}
The first equation is very similar to the one obtained for simple systems \cref{eqn:EL_simple_dot_q}, summing the contributions of dissipative forces, as the momentum equation is not affected by the heat transfer between the different systems. Substituting \cref{eqn:EL_discrete_dot_s,eqn:EL_discrete_dot_gamma} in \cref{eqn:EL_discrete_dot_sigma} we find the evolution equation for entropies
\begin{equation}
\label{eqn:EL_discrete_entropy}
\frac{\partial L}{\partial S_i} \dot{S_i} = \langle F^{fr(i)}, \dot{q} \rangle - \sum\limits_{j=1}^N J_j^i \frac{\partial L}{\partial S_j} ~,
\end{equation}
they are similar to the one for simple systems \cref{eqn:EL_simple_dot_s}, with a new term accounting for heat exchange between the different systems.
\subsection{Metriplectic reformulation}
We start by expressing the equations of motion in term of momentum and Hamiltonian ;
\begin{proposition}
\label{prop:motion_ham_discrete}
Defining $H$ and $p$ as in \cref{def:mom_Ham_simple} the equations of motion in the $(q,p,(S_i))$ variables are: 
\begin{subequations}
\begin{equation}
\dot{q} = \frac{\partial H}{\partial p} ~,
\end{equation}
\begin{equation}
\dot{p} = -\frac{\partial H}{\partial q} + \sum\limits_{i=1}^N F^{fr(i)}~,
\end{equation}
\begin{equation}
\frac{\partial H}{\partial S_i}\dot{S_i} = - \langle F^{fr(i)}(q,\dot{q},S), \dot{q} \rangle - \sum\limits_{j=1}^N J_j^i \frac{\partial H}{\partial S_j}~.
\end{equation}
\end{subequations}
\end{proposition}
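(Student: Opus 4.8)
The plan is to mimic exactly the proof of the analogous Proposition for simple systems: this is purely a change of variables from the Lagrangian description \cref{eqn:EL_discrete} to the Hamiltonian one, using the Legendre transform $(q,\dot q)\mapsto(q,p)$ which we have assumed to be a diffeomorphism.

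First I would recall the standard consequences of \cref{def:mom_Ham_simple}: from $H(q,p,(S_i)) = \langle p,\dot q\rangle - L$ and $p = \partial L/\partial\dot q$, differentiating $H$ and using the chain rule gives the three identities $\partial H/\partial p = \dot q$, $\partial H/\partial q = -\partial L/\partial q$, and $\partial H/\partial S_i = -\partial L/\partial S_i$ for each $i$. These are the same relations used implicitly in the simple-system case; the only new feature is that there are now $N$ entropy variables, but since each $S_i$ enters $L$ (and hence $H$) as an ordinary parameter, the argument for $\partial H/\partial S_i = -\partial L/\partial S_i$ is identical to the single-entropy computation done coordinate by coordinate.

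Next I substitute. The first equation $\dot q = \partial H/\partial p$ is immediate from the identity above. For the momentum equation, I start from \cref{eqn:EL_discrete_dot_q}, i.e. $\frac{d}{dt}\frac{\partial L}{\partial\dot q} - \frac{\partial L}{\partial q} = \sum_i F^{fr(i)}$; since $\frac{d}{dt}\frac{\partial L}{\partial\dot q} = \dot p$ by definition of $p$, and $-\partial L/\partial q = \partial H/\partial q$, this rearranges to $\dot p = -\partial H/\partial q + \sum_i F^{fr(i)}$. For the entropy equations, I take \cref{eqn:EL_discrete_entropy}, namely $\frac{\partial L}{\partial S_i}\dot S_i = \langle F^{fr(i)},\dot q\rangle - \sum_j J_j^i\frac{\partial L}{\partial S_j}$, and replace every occurrence of $\partial L/\partial S_k$ by $-\partial H/\partial S_k$; this flips the overall sign on the left and every sign inside the heat-exchange sum on the right, yielding $\frac{\partial H}{\partial S_i}\dot S_i = -\langle F^{fr(i)},\dot q\rangle - \sum_j J_j^i\frac{\partial H}{\partial S_j}$, which is the claimed third equation. (One should also note that $\dot q$ appearing inside $F^{fr(i)}(q,\dot q,S)$ is now to be read as $\partial H/\partial p$ via the inverse Legendre transform, just as in the simple case.)

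There is no real obstacle here — the statement is, as the authors said of the simple-system analogue, "a simple rewriting." The only point requiring a word of care is the derivation of $\partial H/\partial S_i = -\partial L/\partial S_i$, which relies on the Legendre transform being a diffeomorphism so that $(q,p,(S_i))$ are legitimate independent coordinates and the partial derivatives are unambiguous; this is exactly the standing assumption recorded after \cref{def:mom_Ham_simple}. So the proof is just: differentiate $H$, substitute into \cref{eqn:EL_discrete_dot_q} and \cref{eqn:EL_discrete_entropy}, and read off the result.

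\begin{proof}
This is a rewriting of \cref{eqn:EL_discrete} (more precisely of \cref{eqn:EL_discrete_dot_q} and \cref{eqn:EL_discrete_entropy}) under the change of variables $(q,\dot q,(S_i)) \mapsto (q,p,(S_i))$, exactly as in the simple case. Differentiating $H = \langle p,\dot q\rangle - L$ and using $p = \partial L/\partial\dot q$ gives $\partial H/\partial p = \dot q$, $\partial H/\partial q = -\partial L/\partial q$ and $\partial H/\partial S_i = -\partial L/\partial S_i$. The first equation of motion is then immediate. For the second, $\frac{d}{dt}\frac{\partial L}{\partial\dot q} = \dot p$, so \cref{eqn:EL_discrete_dot_q} becomes $\dot p = -\partial H/\partial q + \sum_i F^{fr(i)}$. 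For the third, substituting $\partial L/\partial S_k = -\partial H/\partial S_k$ into \cref{eqn:EL_discrete_entropy} yields $\frac{\partial H}{\partial S_i}\dot S_i = -\langle F^{fr(i)},\dot q\rangle - \sum_j J_j^i\frac{\partial H}{\partial S_j}$.
\end{proof}
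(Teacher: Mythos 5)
Your proof is correct and takes exactly the route the paper does: the paper's proof is the one-line remark that this is ``a simple rewriting'' under the change of variables $(q,\dot q,(S_i))\mapsto(q,p,(S_i))$, and you simply make the Legendre-transform identities and the substitutions into \cref{eqn:EL_discrete_dot_q} and \cref{eqn:EL_discrete_entropy} explicit. (Incidentally, you correctly substitute into the discrete Euler--Lagrange equations \cref{eqn:EL_discrete}, whereas the paper's proof text cites \cref{eqn:EL_simple}, presumably a copy-paste slip.)
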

\begin{proof}
This is a simple rewriting of \cref{eqn:EL_simple} using the change of variables $(q, \dot{q}, (S_i)) \rightarrow (q, p, (S_i))$ 
\end{proof}
As before, we want to write the evolution of any function $F(q(t),p(t),(S_i(t)))$ as $\dot{F} = \{F,H\} + (F,H;S,H)$ where $S=\sum_i S_i$ is the total entropy. To do so we will conduct a similar computation, introducing $T_i = \frac{\partial H}{\partial S_i}$, the temperature of system $i$, and using the assumption \ref{assump:structure_dissipation} to write $F^{fr(i)} = - \Lambda_i \dot{q}$ 
\begin{align*}
\dot{F} &= \langle \frac{\partial F}{\partial q}, \dot{q} \rangle + \langle \frac{\partial F}{\partial p}, \dot{p} \rangle + \sum\limits_{i=1}^N \frac{\partial F}{\partial S_i} \dot{S_i} \\
	&= \langle \frac{\partial F}{\partial q}, \frac{\partial H}{\partial p} \rangle - \langle \frac{\partial F}{\partial p}, \frac{\partial H}{\partial q} \rangle + \sum\limits_{i=1}^N \langle F^{fr(i)}, \frac{\partial F}{\partial p}\rangle - \sum\limits_{i=1}^N \frac{\partial F}{\partial S_i} \frac{\langle F^{fr(i)}, \frac{\partial H}{\partial p} \rangle}{T_i} - \sum\limits_{i,j=1}^N \frac{\partial F}{\partial S_i} J_j^i \frac{T_j}{T_i} \\
	&= \{F,H\} + \sum\limits_{i=1}^N (F,H;S,H)_{fr(i)} - \sum\limits_{i,j=1}^N \frac{\partial F}{\partial S_i} J_j^i \frac{T_j}{T_i}
\end{align*}
with
\begin{subequations}
\label{eqn:metriplectic_brackets_discrete}
\begin{equation}
\{F,G\} = \langle \frac{\partial F}{\partial q}, \frac{\partial H}{\partial p} \rangle - \langle \frac{\partial F}{\partial p}, \frac{\partial H}{\partial q} \rangle ~,
\end{equation}
\begin{equation}
(F,G;M,N)_{fr(i)} = \frac{1}{T_i} \left( 
\begin{array}{cr}
	\langle \frac{\partial N}{\partial p},\Lambda_i \frac{\partial G}{\partial p} \rangle \frac{\partial F}{\partial S_i}\frac{\partial M}{\partial S_i} 
	-\langle \frac{\partial N}{\partial p},\Lambda_i \frac{\partial F}{\partial p}\rangle \frac{\partial G}{\partial S_i}\frac{\partial M}{\partial S_i}
	+\langle \frac{\partial M}{\partial p},\Lambda_i \frac{\partial F}{\partial p}\rangle \frac{\partial G}{\partial S_i}\frac{\partial N}{\partial S_i} 
	- \langle \frac{\partial M}{\partial p},\Lambda_i \frac{\partial G}{\partial p} \rangle \frac{\partial F}{\partial S_i}\frac{\partial N}{\partial S_i}
\end{array}
 \right) ~.
\end{equation}
\end{subequations}
Where we have used that $\frac{\partial S}{\partial p} = 0$ and $\frac{\partial S}{\partial S_i} = 1$ as before. We now need to treat the last term
\begin{align*}
- \sum\limits_{i,j=1}^N \frac{\partial F}{\partial S_i} J_j^i \frac{T_j}{T_i} &= \sum\limits_{i,j=1}^N \kappa_{ij}  \frac{\partial F}{\partial S_i} \frac{T_j}{T_i} - \sum\limits_{i,j=1}^N \delta_{ij} \sum\limits_{k=1}^N \kappa_{ik} \frac{\partial F}{\partial S_i} \frac{T_j}{T_i} \\
	&= \sum\limits_{i,j=1}^N \kappa_{ij}  \frac{\partial F}{\partial S_i} \frac{T_j}{T_i} - \sum\limits_{i,j=1}^N \kappa_{ij} \frac{\partial F}{\partial S_i} \\
	&= \sum\limits_{i,j=1}^N \kappa_{ij}  \frac{\partial F}{\partial S_i} (\frac{T_j}{T_i}-1) \\
    &= \sum\limits_{i,j=1}^N \frac{\kappa_{ij}}{2} \left(\frac{\partial F}{\partial S_i} (\frac{T_j}{T_i}-1) + \frac{\partial F}{\partial S_j} (\frac{T_i}{T_j}-1)\right) \\
    &= \sum\limits_{i,j=1}^N \frac{\kappa_{ij}}{2} (T_j-T_i) \left(\frac{\partial F}{\partial S_i} \frac{1}{T_i} - \frac{\partial F}{\partial S_j} \frac{1}{T_j}\right) \\
    &= \sum\limits_{i,j=1}^N \frac{\kappa_{ij}}{2} \frac{1}{T_i T_j} \left(\frac{\partial S}{\partial S_i}\frac{\partial H}{\partial S_j}-\frac{\partial S}{\partial S_j}\frac{\partial H}{\partial S_i})(\frac{\partial F}{\partial S_i} \frac{\partial H}{\partial S_j} - \frac{\partial F}{\partial S_j} \frac{\partial H}{\partial S_i} \right) \\,
\end{align*}
using the symmetry of the coefficients $\kappa$. If we define 
\begin{subequations}
\begin{equation}
\label{eqn:discrete_transfer_bracket}
(F,G;M,N)_{tr} = \sum\limits_{i,j=1}^N \frac{\kappa_{ij}}{2} \frac{1}{T_i T_j} \left(\frac{\partial F}{\partial S_i} \frac{\partial G}{\partial S_j} - \frac{\partial F}{\partial S_j} \frac{\partial G}{\partial S_i})(\frac{\partial M}{\partial S_i}\frac{\partial N}{\partial S_j}-\frac{\partial M}{\partial S_j}\frac{\partial N}{\partial S_i}\right) ~,
\end{equation}
\begin{equation}
(\cdot,\cdot;,\cdot,\cdot) = \sum\limits_{i=1}^N (\cdot,\cdot;,\cdot,\cdot)_{fr(i)} + (\cdot,\cdot;\cdot,\cdot)_{tr}
\end{equation}
\end{subequations}
we indeed have $\dot{F} = \{F,H\} + (F,H;S,H)$ where the brackets are satisfying the requirements.
\begin{remark}
The last bracket accounting for entropy creation through heat exchange between the systems can also be rewritten as 
\begin{align*}
(F,G;M,N)_{tr} &= \sum\limits_{i > j} \kappa_{ij} \frac{1}{T_i T_j} \left(\frac{\partial F}{\partial S_i} \frac{\partial G}{\partial S_j} - \frac{\partial F}{\partial S_j} \frac{\partial G}{\partial S_i})(\frac{\partial M}{\partial S_i}\frac{\partial N}{\partial S_j}-\frac{\partial M}{\partial S_j}\frac{\partial N}{\partial S_i}\right) \\
&= \sum\limits_{i > j} (F,G;M,N)_{tr(i,j)} ~, 
\end{align*}
where we can clearly see the contribution of the exchange between each pair of distinct systems. We again remark that this 4-brackets has a Kulkarni-Nomizu product structure between the 2-covariant tensors given by $d_{s_i} f \frac{1}{T_i} d_{s_i} g$ and $d_{s_j} f \frac{1}{T_i} d_{s_j} g$
\end{remark}
Let us summarize what we have just proved:
\begin{theorem}
A curve $(q(t),(S_i(t)),(\Sigma_i(t)),(\Gamma_i(t))) \in Q \times \R^N \times \R^N \times \R^N$ is solution to \cref{pbm:variational_form_discrete} if and only if, doing the change of variables $(q, \dot{q}, (S_i)) \mapsto (q, p, (S_i))$, it is a metriplectic system \ref{def:metriplectic} with the brackets defined by with the brackets defined by \cref{eqn:Poisson-bracket,eqn:discrete_transfer_bracket}, the Hamiltonian as in \cref{eqn:ham_simple} and $S = \sum\limits_{i=1}^N S_i$, and $\Sigma$ and $\Gamma$ satisfy \cref{eqn:EL_discrete_dot_s,eqn:EL_discrete_dot_gamma}. 
\end{theorem}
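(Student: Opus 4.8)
The statement synthesises the material developed just above it, so the plan is to assemble four ingredients: the reduction of \cref{pbm:variational_form_discrete} to Hamiltonian equations of motion, the forward bracket computation already displayed, a check that the two brackets satisfy the axioms of \cref{def:metriplectic}, and a short test-function argument for the converse. First I would invoke \cref{prop:motion_ham_discrete}: since the Legendre transform $(q,\dot q)\mapsto(q,p)$ is assumed a diffeomorphism, a curve $(q(t),(S_i(t)),(\Sigma_i(t)),(\Gamma_i(t)))$ solves \cref{pbm:variational_form_discrete} if and only if, after the change of variables, $(q,p,(S_i))$ satisfies the three equations of \cref{prop:motion_ham_discrete} while $(\Sigma_i,\Gamma_i)$ satisfy \cref{eqn:EL_discrete_dot_s,eqn:EL_discrete_dot_gamma}; here one uses the equivalence between \cref{pbm:variational_form_discrete} and its Euler--Lagrange system \cref{eqn:EL_discrete} recalled in the text (and that \cref{eqn:EL_discrete_dot_s,eqn:EL_discrete_dot_gamma} are precisely what is needed to undo the substitution producing the entropy balance \cref{eqn:EL_discrete_entropy}). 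The $\Sigma$- and $\Gamma$-variables thereby decouple and are carried along verbatim, and the whole content reduces to the equivalence between the $(q,p,(S_i))$ equations of motion and the metriplectic evolution law.

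For the ``only if'' direction I would point to the chain of equalities preceding the theorem: inserting $\dot q,\dot p,\dot S_i$ from \cref{prop:motion_ham_discrete} and writing $F^{fr(i)}=-\Lambda_i\dot q$ via \cref{assump:structure_dissipation}, one obtains, for every smooth $F$, $\dot F=\{F,H\}+\sum_i(F,H;S,H)_{fr(i)}-\sum_{i,j}\frac{\partial F}{\partial S_i}J_j^i\frac{T_j}{T_i}$; the symmetrization step in the display, which uses $\kappa_{ij}=\kappa_{ji}$ and the explicit $J_j^i=-\kappa_{ij}+\delta_{ij}\sum_k\kappa_{ik}$, turns the last sum into $(F,H;S,H)_{tr}$ once $\frac{\partial S}{\partial S_i}=1$ and $\frac{\partial S}{\partial p}=0$ are used. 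It then remains to verify the axioms. The bracket \cref{eqn:Poisson-bracket} is the canonical Poisson bracket, so antisymmetry, bilinearity and the Jacobi identity are classical, and since $S=\sum_iS_i$ depends on none of the $q,p$ we have $\{F,S\}=0$ for all $F$. For the 4-bracket, each $(\cdot,\cdot;\cdot,\cdot)_{fr(i)}$ of \cref{eqn:metriplectic_brackets_discrete} is the Kulkarni--Nomizu product \cref{eqn:KNproduct} of the symmetric 2-tensors $(f,g)\mapsto\frac{1}{T_i}\langle\frac{\partial f}{\partial p},\Lambda_i\frac{\partial g}{\partial p}\rangle$ (symmetric because $\Lambda_i$ is, by \cref{assump:structure_dissipation}) and $(f,g)\mapsto\frac{\partial f}{\partial S_i}\frac{\partial g}{\partial S_i}$, and $(\cdot,\cdot;\cdot,\cdot)_{tr}$ in \cref{eqn:discrete_transfer_bracket} is a sum over pairs $i>j$ of Kulkarni--Nomizu products of the symmetric tensors $\frac{1}{T_i}\frac{\partial f}{\partial S_i}\frac{\partial g}{\partial S_i}$; a Kulkarni--Nomizu product of symmetric 2-tensors automatically satisfies the Riemann-type symmetries required in \cref{def:metriplectic}, and these pass to finite sums, so the full 4-bracket $\sum_i(\cdot,\cdot;\cdot,\cdot)_{fr(i)}+(\cdot,\cdot;\cdot,\cdot)_{tr}$ is admissible.

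For the ``if'' direction, assume $\dot F=\{F,H\}+(F,H;S,H)$ for every smooth $F$ and evaluate at the coordinate functions $F=q_k$, $F=p_k$, $F=S_i$ in turn: the 4-bracket vanishes for $F=q_k$ (all relevant derivatives are zero), produces the $\Lambda_i$-dependent friction terms for $F=p_k$, and reproduces the entropy balance for $F=S_i$, so one recovers precisely the equations of \cref{prop:motion_ham_discrete}; inverting the Legendre transform then returns \cref{eqn:EL_discrete}, which together with the assumed \cref{eqn:EL_discrete_dot_s,eqn:EL_discrete_dot_gamma} is a solution of \cref{pbm:variational_form_discrete}. The only step that is not pure bookkeeping is the symmetrization identity converting $-\sum_{i,j}\frac{\partial F}{\partial S_i}J_j^i\frac{T_j}{T_i}$ into the Kulkarni--Nomizu form \cref{eqn:discrete_transfer_bracket} --- i.e. recognising $\frac{\kappa_{ij}}{2}\big(\frac{\partial F}{\partial S_i}(\frac{T_j}{T_i}-1)+\frac{\partial F}{\partial S_j}(\frac{T_i}{T_j}-1)\big)$ as $\frac{\kappa_{ij}}{2}(T_j-T_i)\big(\frac{1}{T_i}\frac{\partial F}{\partial S_i}-\frac{1}{T_j}\frac{\partial F}{\partial S_j}\big)$ --- which is exactly the computation carried out in the display preceding the theorem. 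The remaining work is thus to cite \cref{prop:motion_ham_discrete}, that computation, and the automatic Riemann symmetry of a sum of Kulkarni--Nomizu products, and to add the elementary test-function argument for the converse; I expect no further obstacle.
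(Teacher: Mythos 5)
Your proposal is correct and follows essentially the same route as the paper: the paper's proof is precisely the displayed computation of $\dot F$ from \cref{prop:motion_ham_discrete}, the identification of the friction brackets via \cref{assump:structure_dissipation}, and the $\kappa_{ij}$-symmetrization yielding \cref{eqn:discrete_transfer_bracket}. Your explicit verification of the bracket axioms via the Kulkarni--Nomizu structure and the coordinate-function argument for the converse are correct elaborations of steps the paper leaves implicit.
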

The equations satisfied by $\Sigma$ and $\Gamma$ were added for completeness, but in practice those are auxiliary variables of the variational principle that do not play any role in the dynamic.
\subsection{Example: two connected piston problem}
As illustration of two connected systems we consider the problem of pistons connected by a thermically isolated heat conducting rod. The total mass of the rod and the two piston is denoted by M, and the position of the system is characterized by $x$, the distance of the left piston to the bottom of the left cylinder. The internal energy of the left (resp. right) system is denoted by $U_1(x,S_1)$ (resp. $U_2(x,S_2)$) where $S_1$ (resp. $S_2$) denotes the entropy of the left (resp. right) system (see again \cite{gay2017lagrangian1} sec. 4.3 for more detailed description of the problem). The Lagrangian of the total system is:
\begin{equation}
\label{eqn:lagrangian_two_piston}
L(x,\dot{x},S_1,S_2) = \frac{1}{2} M \dot{x}^2 - U_1(x, S_1) - U_2(x, S_2) ~,
\end{equation} 
in each piston there is a friction force given by 
\begin{equation}
\label{eqn:friction_two_piston}
F^{fr(i)} = - \lambda_i \dot{x}
\end{equation}
which satisfies \cref{eqn:structure_dissipation}, and the heat transfer from the left system to the right is given by $\kappa(T_1-T_2)$. According to the previous theory, we define 
\begin{subequations}
\label{eqn:mom_Ham_brack_two_piston}
\begin{equation}
p = M\dot{x} ~,
\end{equation}
\begin{equation}
H = \frac{p^2}{2M} + U_1(x,S_1) + U_2(x,S_2) ~.
\end{equation}
\begin{equation}
\{F,G\} = \langle \frac{\partial F}{\partial x}, \frac{\partial G}{\partial p} \rangle - \langle \frac{\partial F}{\partial p}, \frac{\partial G}{\partial x} \rangle ~,
\end{equation}
\begin{equation}
(F,G;M,N)_{fr(i)} = \frac{\lambda_i}{T_i} \left(\frac{\partial N}{\partial p} \frac{\partial G}{\partial p} \frac{\partial F}{\partial S_i}\frac{\partial M}{\partial S_i} -\frac{\partial N}{\partial p} \frac{\partial F}{\partial p} \frac{\partial G}{\partial S_i}\frac{\partial M}{\partial S_i}
	+ \frac{\partial M}{\partial p} \frac{\partial F}{\partial p} \frac{\partial G}{\partial S_i}\frac{\partial N}{\partial S_i} - \frac{\partial M}{\partial p} \frac{\partial G}{\partial p} \frac{\partial F}{\partial S_i}\frac{\partial N}{\partial S_i} \right)
\end{equation}
\begin{equation}
(F,G;M,N)_{tr} = \frac{\kappa}{T_1 T_2} \left(\frac{\partial F}{\partial S_1} \frac{\partial G}{\partial S_2} - \frac{\partial F}{\partial S_2} \frac{\partial G}{\partial S_1})(\frac{\partial M}{\partial S_1}\frac{\partial N}{\partial S_2}-\frac{\partial M}{\partial S_2}\frac{\partial N}{\partial S_1} \right)
\end{equation}
\end{subequations}
And the dynamic of the system is described by $\dot{F}= \{F,H\} + (F,H;S,H)_{fr(1)} + (F,H;S,H)_{fr(2)} + (F,H;S,H)_{tr}$, with $S = S_1 + S_2$. Inserting for $F$ the functions $p$, $S_1$ and $S_2$ we find 
\begin{equation}
\label{eqn:dynamic_two_piston}
\dot{p}=-\frac{\partial U_1}{\partial x} -\frac{\partial U_2}{\partial x} -(\lambda_1+\lambda_2) \frac{p}{m} \qquad \dot{S_1} = \frac{\lambda_1}{T_1} \frac{p^2}{m^2} + \frac{\kappa}{T_1}(T_2-T_1) \qquad \dot{S_2} = \frac{\lambda_2}{T_2} \frac{p^2}{m^2} + \frac{\kappa}{T_2}(T_1-T_2) ~.
\end{equation}
We indeed recover the good equation of motion as given in \cite{gruber1999thermodynamics} (modulo change of variables).
\section{Euler-Poincaré systems}
\label{sec:EP}
Our theory also applies to Euler-Poincaré systems resulting from the reduction of a Lagrangian on a Lie group with advected parameters (see \cite{holm1998euler} for more details). If this Lagrangian is invariant by the left action of the Lie group, it can be reduced to the Lie algebra giving a variational principle on the Lie algebra. In this case we consider a Lie algebra $\g$ that acts from the left on a vector space $V$ (that will be the space of our non-entropy advected parameters) and $\R$ (for the entropy). The evolution of the system is once again given by a Lagrangian $l : ~ \g \times V \times \R \mapsto \R$ and we suppose that our system is subject to a friction force $F^{fr} \in \g^*$ (since we do not work anymore in $\R^d$, let us be more careful with the spaces in which the objects are defined). We can do the same assumption as before and write $F^{fr} = - \Lambda \dot{q}$, with now $\Lambda$ being a $1$-covariant, $1$-contravariant tensor. The symmetry hypothesis is replaced by supposing that the bilinear form $(a, b) \in \g^2 \mapsto \langle a, \Lambda b \rangle$ is symmetric. Hence it defines a (pseudo-)metric on $\g$.
\subsection{Variational principle}
We consider the following problem:
\begin{problem}[Variational formulation for Euler-Poincaré thermodynamic system]
\label{pbm:variational_form_EP}
Find curves $\xi(t) \in \g, ~ a(t) \in V, ~ s(t) \in \R$ such that
\label{eqn:variational_form_EP}
\begin{equation}
\label{eqn:lagrangian_EP}
\delta \int_0^{T_f} l(\xi, a, s) dt = 0 ~,
\end{equation}
subject to the following variational constraints 
\begin{equation}
\label{eqn:var_constraint_EP}
\delta \xi = \partial_t \eta + [\xi, \eta] ~, \qquad
\delta a + \eta a = 0 ~, \qquad
\frac{\partial l}{\partial s} (\delta S + \eta s)  = \langle F^{fr}, \eta \rangle ~,
\end{equation}
and advection/phenomenological constraints
\begin{equation}
\label{eqn:advection_EP}
\partial_t a + \xi a = 0 ~, \qquad
\frac{\partial l}{\partial s} (\partial_t s + \xi s)  = \langle F^{fr}, \xi \rangle ~,
\end{equation}
for a curve $\eta(t)$ in $\g$ that vanishes at $t=0$ and $t=T_f$.
\end{problem}%
The solution curves to this problem satisfy  
\begin{equation}
\label{eqnEP_mom}
\langle \partial_t \frac{\delta l}{\delta \xi}, \eta \rangle - \langle \frac{\delta l}{\delta \xi},[\xi, \eta]\rangle - \langle \frac{\delta l}{\delta a}, \eta a \rangle - \langle \frac{\delta l}{\delta s}, \eta s \rangle + \langle F^{fr}, \eta \rangle = 0 ~,
\end{equation}
for all $\eta \in \g$. The evolution of entropy is given by \cref{eqn:advection_EP}, which is similar to \cref{eqn:EL_simple_dot_s} with an advection term $\xi s$, and the evolution of the other parameters is purely advection.
\subsection{Metriplectic formulation} 
\begin{definition}
\label{def:mom_Ham_EP}
We introduce the canonical momentum and Hamiltonian:
\begin{equation}
\mu = \frac{\partial l}{\partial \xi} ~,
\end{equation}
\begin{equation}
\label{eqn:ham_EP}
h = \langle \mu, \xi \rangle - l ~.
\end{equation}
\end{definition}
Where compared to the standard momentum, we only take partial derivative with respect to the first variable. We remark that $\frac{\delta h}{\delta \mu} = \xi$, this identity will be useful in the next computation.
In the same fashion as the previous sections, we consider a functional $f(\mu, a, s)$ and compute its time derivative
\begin{align*}
\dot{f} &= \langle \frac{\delta f}{\delta \mu}, \dot{\mu} \rangle + \langle \frac{\delta f}{\delta a}, \dot{a} \rangle + \langle \frac{\delta f}{\delta s}, \dot{s} \rangle \\
	    &= \langle \mu, [\frac{\delta h}{\delta p}, \frac{\delta f}{\delta \mu}] \rangle + \langle \frac{\delta h}{\delta a}, \frac{\delta f}{\delta \mu} a \rangle + \langle \frac{\delta h}{\delta s}, \frac{\delta f}{\delta \mu} a \rangle - \langle F^{fr}, \frac{\delta F}{\delta \mu} \rangle \\
	    &- \langle \frac{\delta f}{\delta a}, \frac{\delta h}{\delta \mu} a \rangle - \langle \frac{\delta f}{\delta s}, \frac{\delta h}{\delta \mu} s \rangle - \frac{\delta f}{\delta s} (\frac{1}{T} \langle F^{fr}, \frac{\delta h}{\delta \mu} \rangle) \\
	    &= \{f,h\} + (f,h;s,h) ~.
\end{align*}
With 
\begin{equation}
\label{eqn:EPLP_bracket}
\{f,h\} = \langle \mu, [\frac{\delta h}{\delta p}, \frac{\delta f}{\delta \mu}] \rangle + \langle \frac{\delta h}{\delta a}, \frac{\delta f}{\delta \mu} a \rangle - \langle \frac{\delta f}{\delta a}, \frac{\delta h}{\delta \mu} a \rangle + \langle \frac{\delta h}{\delta s}, \frac{\delta f}{\delta \mu} a \rangle  - \langle \frac{\delta f}{\delta s}, \frac{\delta h}{\delta \mu} s \rangle ~,
\end{equation}
the standard Lie-Poisson bracket \cite{holm1998euler}, and 
\begin{equation}
\label{eqn:EP_metric_bracket_sym}
(f,g;m,n) = \frac{1}{T}\left(\langle \frac{\partial n}{\partial p},\Lambda \frac{\partial f}{\partial p}\rangle \frac{\partial g}{\partial s}\frac{\partial m}{\partial s} - \langle \frac{\partial n}{\partial p},\Lambda \frac{\partial g}{\partial p} \rangle \frac{\partial f}{\partial s}\frac{\partial m}{\partial s}-\langle \frac{\partial m}{\partial p},\Lambda \frac{\partial f}{\partial p}\rangle \frac{\partial g}{\partial s}\frac{\partial n}{\partial s} + \langle \frac{\partial m}{\partial p},\Lambda \frac{\partial g}{\partial p} \rangle \frac{\partial f}{\partial s}\frac{\partial n}{\partial s}\right) ~,
\end{equation}
the same dissipative bracket as in the non-reduced case, where $T = \frac{\delta h}{\delta s}$. Hence, we have proved the following theorem:
\begin{theorem}
A curve $(\xi(t), a(t),s(t)) \in \g \times V \times \R$ is solution to \cref{pbm:variational_form_EP} if and only if, doing the change of variables $(\xi, a, s) \mapsto (\mu, a, s)$, it is a metriplectic system \ref{def:metriplectic} with the brackets defined by \cref{eqn:EPLP_bracket,eqn:EP_metric_bracket_sym}, and the Hamiltonian \cref{eqn:ham_EP}.
\end{theorem}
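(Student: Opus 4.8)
The plan is to prove the equivalence through a common intermediate object — the equations of motion written in the Hamiltonian variables $(\mu,a,s)$ — following the same two-step pattern as in \cref{sec:1st_metri_simple,sec:2nd_metri_simple,sec:discrete}, the one new feature being that the mechanical part now carries a Lie--Poisson rather than a canonical structure.

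\textbf{Forward direction.} I would start from the reduced equation \cref{eqnEP_mom} together with the advection and phenomenological constraints \cref{eqn:advection_EP}, which characterize the solutions of \cref{pbm:variational_form_EP}. Since \cref{eqnEP_mom} holds for every $\eta \in \g$ and the pairing $\g \times \g^* \to \R$ is non-degenerate, it is equivalent to a single evolution equation in $\g^*$ for $\frac{\delta l}{\delta \xi}$, of coadjoint type, with the advected-parameter and entropy terms entering through the associated diamond operators and the dissipative force $F^{fr}$ as an external forcing. Passing to the Hamiltonian side via \cref{def:mom_Ham_EP} — using $\mu = \frac{\delta l}{\delta \xi}$, $\frac{\delta h}{\delta\mu} = \xi$ (noted in the text), and the Legendre identities $\frac{\delta h}{\delta a} = -\frac{\delta l}{\delta a}$, $\frac{\delta h}{\delta s} = -\frac{\delta l}{\delta s} = -T$, all legitimate because the Legendre transform is assumed a diffeomorphism — I obtain the $(\mu,a,s)$-equations of motion: a forced Lie--Poisson equation for $\mu$, pure advection $\partial_t a + \xi a = 0$, and $\partial_t s + \xi s = -\frac{1}{T}\langle F^{fr},\xi\rangle$. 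This is the Euler--Poincar\'e analogue of the Hamiltonian-form proposition stated for simple systems, and I would record it as a lemma.

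\textbf{The observable computation.} For an arbitrary functional $f(\mu,a,s)$ I would expand $\dot f = \langle \frac{\delta f}{\delta\mu},\dot\mu\rangle + \langle \frac{\delta f}{\delta a},\dot a\rangle + \frac{\delta f}{\delta s}\dot s$ by the chain rule, substitute the lemma, and reorganize — this is the computation sketched in compressed form just before the theorem. The terms not involving $F^{fr}$ assemble, by inspection, into \cref{eqn:EPLP_bracket}; that this is a genuine Poisson bracket (bilinear, antisymmetric, Jacobi) with $a$ and $s$ treated as advected parameters is classical, and I would cite \cite{holm1998euler}. The remaining contribution is $\langle F^{fr},\frac{\delta f}{\delta\mu}\rangle - \frac{1}{T}\frac{\delta f}{\delta s}\langle F^{fr},\xi\rangle$; substituting \cref{eqn:structure_dissipation} as $F^{fr} = -\Lambda\xi = -\Lambda\frac{\delta h}{\delta\mu}$, using the assumed symmetry of $(a,b)\mapsto\langle a,\Lambda b\rangle$ on $\g$ and the trivial identities $\frac{\delta s}{\delta\mu}=0$, $\frac{\delta s}{\delta s}=1$, completes this to the Kulkarni--Nomizu expression \cref{eqn:EP_metric_bracket_sym}, exactly as in \cref{sec:2nd_metri_simple}; the curvature-type symmetries of $(\cdot,\cdot;\cdot,\cdot)$ and the positivity of the entropy production rate $\frac{1}{T}\langle\xi,\Lambda\xi\rangle$ are then inherited from that product structure. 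To finish verifying the hypotheses of \cref{def:metriplectic} I would check that $s$ is a Casimir, $\{f,s\}=0$ for all $f$: the entropy enters \cref{eqn:EPLP_bracket} only through terms in which a functional derivative $\frac{\delta\cdot}{\delta s}$ is paired against an advection/coadjoint term, and those annihilate the constant derivative $\frac{\delta s}{\delta s}=1$ — this is the degeneracy property built into the metriplectic framework, and it is the one place where the precise form of the $\g$-action on the entropy slot must be used.

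\textbf{Converse.} Here I would run the computation backwards: if $\dot f = \{f,h\}+(f,h;s,h)$ holds for every observable, then testing against the coordinate functionals $\mu$, the components of $a$, and $s$ reproduces the $(\mu,a,s)$-equations of motion of the lemma; applying the inverse Legendre transform $(\mu,a,s)\mapsto(\xi,a,s)$ returns \cref{eqnEP_mom} and \cref{eqn:advection_EP}, hence a solution of \cref{pbm:variational_form_EP}. I expect the main obstacle to be organizational rather than conceptual — keeping the Lie-algebraic chain rule honest: correctly pairing functional derivatives, tracking the diamond/coadjoint terms and their signs, and in particular confirming that the entropy-advection term $\xi s$ lands in the symplectic part and not in the metric part, so that the split $\dot f = \{f,h\}+(f,h;s,h)$ is the claimed one and $(\cdot,\cdot;\cdot,\cdot)$ coincides termwise with the dissipative bracket of \cref{sec:2nd_metri_simple}.
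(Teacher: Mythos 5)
Your proposal follows the same route as the paper: pass to $(\mu,a,s)$ via the Legendre transform, expand $\dot f$ by the chain rule, substitute the equations of motion, and split the result into the Lie--Poisson bracket \cref{eqn:EPLP_bracket} and the dissipative 4-bracket \cref{eqn:EP_metric_bracket_sym}; your added lemma, the Casimir check, and the converse by testing coordinate functionals are exactly the completions the paper leaves implicit. The only slip is notational: the paper sets $T=\frac{\delta h}{\delta s}=-\frac{\delta l}{\delta s}$, so your Legendre identity should read $\frac{\delta h}{\delta s}=-\frac{\delta l}{\delta s}=T$ rather than $-T$ (your subsequent entropy equation already uses the correct convention).
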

\subsection{Example: Viscous and heat conducting fluids}
We now show how to apply the previous result to a more involved example: the thermodynamic Navier-Stokes equations. For the sake of simplicity and in the goal of illustrating the previous result, we only present here the principle in spatial representation. This model is expressed in the variables $\uu$ (the velocity of the fluid), $\rho$ (the density of the fluid) and $s$ (its entropy). We suppose that the fluid is evolving in a domain $\Omega \subset \R^3$, that the system is closed and there is no exterior power supply. The Lagrangian is 
\begin{equation}
\label{eqn:lag_visc_fluid}
l(\uu, \rho, \sigma) = \int_\Omega \frac{1}{2}\rho |\uu^2| - e(\rho, s) d\xx ~,
\end{equation}
where $e$ is the internal energy of the fluid (as a function of $\rho$ and $s$). We denote $X(\Omega)$ the set of vector fields on $\Omega$ (that is smooth functions $\Omega \rightarrow \R^3$) and $F(\Omega)$ the set of real valued (smooth) functions on $\Omega$.
Solutions to the thermodynamic Navier-Stokes equations can also be found as solutions of the following problem:
\begin{problem}[Variational Formulation for thermodynamic Navier-Stokes]
\label{pbm:var_NS}
Find curves $s(t), \sigma(t), \rho(t), \gamma(t) \in F(\Omega)$ and $\uu(t) \in X(\Omega)$ that extremize the following functional: 
\begin{equation}
\label{eqn:var_princip_fluid}
\int_{0}^{T_f} (l(\uu, \rho, s) + \int_\Omega (s-\sigma) (\partial_t \gamma + \uu \cdot \nabla \gamma)) dt ~,
\end{equation}
under the constraints 
\begin{align}
\label{eqn:constraints_fluid}
\delta \uu = \partial_t \vv + [\uu,\vv]~, \qquad \delta \rho = -\nabla \cdot(\rho\vv)~, \qquad \partial_t \rho + \nabla \cdot (\rho \uu) = 0 ~, \\
\frac{\delta l}{\delta s}(\partial_t \sigma + \nabla \cdot (\sigma \uu)) = - \ssig^{fr} : \nabla \uu + j_s \cdot \nabla (\partial_t \gamma + \uu \cdot \nabla \gamma) ~, \\
\frac{\delta l}{\delta s}(\delta \sigma + \nabla \cdot (\sigma \vv)) = - \ssig^{fr} : \nabla \vv + j_s \cdot \nabla (\delta \gamma + \vv \cdot \nabla \gamma ) ~.
\end{align} 
where $\vv(t)$ is a curve on $X(\Omega)$ (a time dependent vector field) that vanishes at $t=0$ and $t=T$.
\end{problem}
The first line of constraints comes purely from the advection of quantities, whereas the second and third lines come from both advection of parameters and thermodynamic constraints. $\ssig^{fr}$ is the friction tensor, which we suppose to have the form 
\begin{equation}
\label{eqn:structure_viscosity}
\ssig^{fr} = \Lambda \nabla \uu ~,
\end{equation}
where $\Lambda$ is a 4-tensor that has the symmetry $\Lambda_{ijkl} = \Lambda_{klij}$ and is positive in the sense that $a_{ij}\Lambda_{ijkl}a_{kl}\geq 0$ for any two tensor $(a_{ij})$. Note that since we work in $\R^3$ we do not distinguish between covariant and contravariant indices. We also have introduced $j_s$ the heat flux and we suppose according to Fourier law
\begin{equation}
\label{eqn:Fourier}
T j_s = - \kappa \nabla T ~.
\end{equation}
After computation (see for example \cite{gawlik2022variational,gay2017lagrangian2}) we see that solutions to this variational problem satisfy
\begin{align}
\label{eqn:variation_v}
\int_\Omega \partial_t (\rho \uu) \cdot \vv &= \int_\Omega \rho \uu \cdot [\uu, \vv] + \frac{\delta e}{\delta \rho} \nabla \cdot (\rho \vv) + \frac{\delta e}{\delta s} \nabla \cdot (s \vv) - \ssig^{fr} : \nabla \vv ~,
\end{align}
for any (smooth) vector field $\vv$ and at any time $t$. We also have the evolution equations of density and entropy 
\begin{subequations}
\label{eqn:density_entropy_fluid}
\begin{equation}
\partial_t \rho + \nabla \cdot (\rho \uu) = 0 ~,
\end{equation}
\begin{equation}
\partial_t s + \nabla \cdot (s \uu) = \frac{1}{T}(\ssig^{fr} : \nabla \uu - j_s \cdot \nabla \frac{\delta l}{\delta s}) - \nabla \cdot(j_s) ~.
\end{equation}
\end{subequations}
We introduce the momentum and hamiltonian:
\begin{subequations}
\label{eqn:mom_Ham_fluid}
\begin{equation}
\mm = \frac{\delta l}{\delta \uu} = \rho \uu ~,
\end{equation}
\begin{equation}
h = \langle \mm, \uu \rangle - l = \int_\Omega \frac{\mm^2}{2 \rho} + e(\rho, s) d\xx ~.
\end{equation}
\end{subequations}
We start by denoting that $\frac{\delta h}{\delta \mm} = \uu$, $\frac{\delta h}{\delta \rho} = \frac{\delta e}{\delta \rho}$ and $\frac{\delta h}{\delta s} = \frac{\delta e}{\delta s}$.
Let $\mm$, $\rho$ and $s$ being solution of this equation, and consider a functional $f(\mm, \rho, s)$, we will do the same computation as before to rewrite those equations in a metriplectic setting.
\begin{align*}
\dot{f} &= \int_\Omega \frac{\delta f}{\delta \mm} \dot{\mm} + \frac{\delta f}{\delta \rho} \dot{\rho} + \frac{\delta f}{\delta s} \dot{s} \\
		&= \int_\Omega \mm \cdot [\uu, \frac{\delta f}{\delta \mm}] + \frac{\delta e}{\delta \rho} \nabla \cdot (\rho \frac{\delta f}{\delta \mm}) + \frac{\delta e}{\delta s} \nabla \cdot (s \frac{\delta f}{\delta \mm}) - \ssig^{fr} : \nabla \frac{\delta f}{\delta \mm} - \frac{\delta f}{\delta \rho} \nabla \cdot (\rho \uu) \\
		& - \frac{\delta f}{\delta s} \nabla \cdot (s \uu) + \frac{\delta f}{\delta s}\frac{1}{T}(\ssig^{fr} : \nabla \uu -j_s \cdot \nabla \frac{\delta h}{\delta s}) - \frac{\delta f}{\delta s} \nabla \cdot(j_s) \\
		&= \int_\Omega \mm \cdot [\frac{\delta h}{\delta \mm}, \frac{\delta f}{\delta \mm}] + \frac{\delta h}{\delta \rho} \nabla \cdot (\rho \frac{\delta f}{\delta \mm}) - \frac{\delta f}{\delta \rho} \nabla \cdot (\rho \frac{\delta h}{\delta \mm}) + \frac{\delta h}{\delta s} \nabla \cdot (s \frac{\delta f}{\delta \mm}) - \frac{\delta f}{\delta s} \nabla \cdot (s \frac{\delta h}{\delta \mm})  \\
		& + \frac{1}{T} \left(\frac{\delta f}{\delta s} (\nabla\frac{\delta h}{\delta \mm} : \Lambda : \nabla\frac{\delta h}{\delta \mm}) - \frac{\delta h}{\delta s} (\nabla\frac{\delta h}{\delta \mm} : \Lambda : \nabla\frac{\delta f}{\delta \mm}  + \frac{\kappa}{T}\left(\frac{\delta f}{\delta s} \nabla \frac{\delta h}{\delta s} \cdot \nabla \frac{\delta h}{\delta s} - \frac{\delta h}{\delta s} \nabla \frac{\delta f}{\delta s} \cdot \nabla \frac{\delta h}{\delta s}\right)\right) \\
		&= \{f,h\} + \int_\Omega \frac{1}{T}\left(
		\begin{array}{cr}
		\nabla\frac{\delta h}{\delta \mm} : \Lambda : \nabla\frac{\delta h}{\delta \mm}) \frac{\delta f}{\delta s} \frac{\delta s}{\delta s}  
		- (\nabla\frac{\delta h}{\delta \mm} : \Lambda : \nabla\frac{\delta f}{\delta \mm})\frac{\delta h}{\delta s} \frac{\delta s}{\delta s} 
		+ (\nabla\frac{\delta s}{\delta \mm} : \Lambda : \nabla\frac{\delta f}{\delta \mm}) \frac{\delta h}{\delta s} \frac{\delta h}{\delta s} 
		- (\nabla\frac{\delta s}{\delta \mm} : \Lambda : \nabla\frac{\delta h}{\delta \mm})\frac{\delta f}{\delta s} \frac{\delta h}{\delta s}) 
		\end{array}
		\right) \\
		&+ \int_\Omega \frac{\kappa}{T^2}\left(\nabla \frac{\delta h}{\delta s} \cdot \nabla \frac{\delta h} {\delta s} \frac{\delta f}{\delta s} \frac{\delta s}{\delta s} - \nabla \frac{\delta f}{\delta s} \cdot \nabla \frac{\delta h}{\delta s} \frac{\delta h}{\delta s} \frac{\delta s}{\delta s} 
        + \nabla \frac{\delta s}{\delta s} \cdot \nabla \frac{\delta f} {\delta s} \frac{\delta h}{\delta s} \frac{\delta h}{\delta s} - \nabla \frac{\delta s}{\delta s} \cdot \nabla \frac{\delta h}{\delta s} \frac{\delta f}{\delta s} \frac{\delta h}{\delta s}\right)\\
        &= \{f,h\} + (f,h;s,h)_{visc} + (f,h;s,h)_{heat} ~ ,
\end{align*}
with 
\begin{subequations}
\label{eqn:brackets_fluid}
    \begin{align}
    \label{eqn:Lie_poisson_fluid}
        \{f,g\} &= \int_\Omega \mm \cdot [\frac{\delta g}{\delta \mm}, \frac{\delta f}{\delta \mm}] + \frac{\delta g}{\delta \rho} \nabla \cdot (\rho \frac{\delta f}{\delta \mm}) - \frac{\delta f}{\delta \rho} \nabla \cdot (\rho \frac{\delta g}{\delta \mm}) + \frac{\delta g}{\delta s} \nabla \cdot (s \frac{\delta f}{\delta \mm}) - \frac{\delta f}{\delta s} \nabla \cdot (s \frac{\delta g}{\delta \mm}) ~, \\
        (f,g;m,n)_{visc} &= \int_\Omega \frac{1}{T}\left(
        \begin{array}{cr}
        \nabla\frac{\delta n}{\delta \mm} : \Lambda : \nabla\frac{\delta g}{\delta \mm}) \frac{\delta f}{\delta s} \frac{\delta m}{\delta s}  
        - (\nabla\frac{\delta n}{\delta \mm} : \Lambda : \nabla\frac{\delta f}{\delta \mm})\frac{\delta g}{\delta s} \frac{\delta m}{\delta s} \\
		+ (\nabla\frac{\delta m}{\delta \mm} : \Lambda : \nabla\frac{\delta f}{\delta \mm}) \frac{\delta g}{\delta s} \frac{\delta n}{\delta s}  
		- (\nabla\frac{\delta m}{\delta \mm} : \Lambda : \nabla\frac{\delta g}{\delta \mm})\frac{\delta f}{\delta s} \frac{\delta n}{\delta s}) 
        \end{array} 
        \right) ~,\\
        (f,g;m,n)_{heat} &= \int_\Omega \frac{\kappa}{T^2}\left(\nabla \frac{\delta n}{\delta s} \cdot \nabla \frac{\delta g}{\delta s} \frac{\delta f}{\delta s} \frac{\delta m}{\delta s} - \nabla \frac{\delta n}{\delta s} \cdot \nabla \frac{\delta f}{\delta s} \frac{\delta g}{\delta s} \frac{\delta m}{\delta s} 
        +\nabla \frac{\delta m}{\delta s} \cdot \nabla \frac{\delta f}{\delta s} \frac{\delta g}{\delta s} \frac{\delta n}{\delta s} - \nabla \frac{\delta m}{\delta s} \cdot \nabla \frac{\delta g}{\delta s} \frac{\delta f}{\delta s} \frac{\delta n}{\delta s}\right) ~.
    \end{align}
\end{subequations}
The symplectic part is exactly the standard Lie-Poisson bracket \cite{holm1998euler}, while the dissipative parts are now rewritten with the metriplectic 4-bracket formalism. We recover the brackets proposed in \cite{morrison2024inclusive} in a more constructive way. We summarize this results in
\begin{theorem}
A curve $(\rho(t),s(t),\sigma(t),\gamma(t),\uu(t) \in F(\Omega)^4 \times X(\Omega))$ is solution to \cref{pbm:var_NS} if and only if, doing the change of variables $(\rho(t),s(t), \uu(t)) \mapsto (\rho(t),s(t), \mm(t))$, it is a metriplectic system \ref{def:metriplectic} with the brackets defined by \cref{eqn:brackets_fluid}, the Hamiltonian as in \cref{eqn:mom_Ham_fluid}, and $\sigma$ satisfies $\dot{\sigma} = \dot{s} + \nabla \cdot((s-\sigma)\uu) + \nabla \cdot j_s$ and $\dot{\gamma} + \uu\cdot \nabla \gamma = T$.
\end{theorem}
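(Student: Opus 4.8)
The plan is to derive both implications from the chain of equalities already displayed above, supplemented by a check that the brackets in \cref{eqn:brackets_fluid} have the structure required by \cref{def:metriplectic}. For the forward implication, suppose $(\rho,s,\sigma,\gamma,\uu)$ solves \cref{pbm:var_NS}. Imposing stationarity of \cref{eqn:var_princip_fluid} over variations constrained as in \cref{pbm:var_NS} and integrating by parts in space and time (this is the computation of \cite{gawlik2022variational,gay2017lagrangian2}) produces \cref{eqn:variation_v,eqn:density_entropy_fluid} together with the two auxiliary relations $\dot\sigma = \dot s + \nabla\cdot((s-\sigma)\uu) + \nabla\cdot j_s$ and $\dot\gamma + \uu\cdot\nabla\gamma = T$ obtained from the variations associated with $\gamma$ and $\sigma$. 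One then passes to the momentum $\mm = \rho\uu$ (invertible since $\rho>0$), records the Legendre identities $\frac{\delta h}{\delta\mm} = \uu$, $\frac{\delta h}{\delta\rho} = \frac{\delta e}{\delta\rho}$, $\frac{\delta h}{\delta s} = \frac{\delta e}{\delta s} = T$, substitutes the constitutive law \cref{eqn:structure_viscosity} and Fourier's law \cref{eqn:Fourier} into the entropy balance, and runs the displayed computation for an arbitrary functional $f(\mm,\rho,s)$; this yields $\dot f = \{f,h\} + (f,h;s,h)_{visc} + (f,h;s,h)_{heat}$ with the brackets \cref{eqn:brackets_fluid}.

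It then remains to verify the hypotheses of \cref{def:metriplectic}. The bracket \cref{eqn:Lie_poisson_fluid} is the standard Lie--Poisson bracket on the dual of the semidirect product of $X(\Omega)$ with $F(\Omega)\oplus F(\Omega)$, so it is bilinear and antisymmetric, with the Jacobi identity inherited from that construction \cite{holm1998euler}. The Casimir property $\{f,s\}=0$ holds because $\frac{\delta s}{\delta\mm}=0$, $\frac{\delta s}{\delta\rho}=0$ and $\frac{\delta s}{\delta s}=1$ collapse $\{f,s\}$ to $\int_\Omega\nabla\cdot\big(s\,\frac{\delta f}{\delta\mm}\big)$, which vanishes under the boundary conditions of the closed system. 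For the $4$-bracket, the integrands of $(\cdot,\cdot;\cdot,\cdot)_{visc}$ and $(\cdot,\cdot;\cdot,\cdot)_{heat}$ are, pointwise, Kulkarni--Nomizu products \cref{eqn:KNproduct} of the symmetric $2$-covariant tensors $\frac1T\nabla\frac{\delta f}{\delta\mm}:\Lambda:\nabla\frac{\delta g}{\delta\mm}$ (symmetric because $\Lambda_{ijkl}=\Lambda_{klij}$), respectively $\frac{\kappa}{T^2}\nabla\frac{\delta f}{\delta s}\cdot\nabla\frac{\delta g}{\delta s}$, with $\frac{\delta f}{\delta s}\frac{\delta g}{\delta s}$; hence each carries the Riemann-type symmetries of a metric $4$-bracket, and positivity of $\Lambda$ and $\kappa$ makes the induced entropy production nonnegative.

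For the converse, start from a curve that is a metriplectic system with these data and evaluate $\dot f = \{f,h\}+(f,h;s,h)$ on coordinate functionals: $f=\langle\mm,\vv\rangle$ for a time-independent test field $\vv$ reproduces \cref{eqn:variation_v}, $f=\int_\Omega\rho\phi$ reproduces the continuity equation, and $f=\int_\Omega s\phi$ reproduces the entropy balance \cref{eqn:density_entropy_fluid} once \cref{eqn:structure_viscosity,eqn:Fourier} are used to name $\ssig^{fr}$ and $j_s$. Undoing $\mm=\rho\uu$ and adjoining the postulated evolution equations for $\sigma$ and $\gamma$ then recovers exactly the Lagrange--d'Alembert system that characterizes \cref{pbm:var_NS}; that any solution of this system is a critical curve of \cref{eqn:var_princip_fluid} under the stated constraints is the standard converse of the variational-thermodynamics calculus \cite{gay2017lagrangian2}.

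I expect the main obstacle to be bookkeeping rather than anything conceptual: controlling the boundary terms generated by the spatial integrations by parts --- they must vanish both in the Casimir identity and in the reduction of the $4$-brackets, which is exactly where closedness of the system is used --- and checking that the constrained variations of \cref{pbm:var_NS} are handled consistently, so that the forward and backward passages are genuine inverses and, in particular, that introducing or eliminating the auxiliary fields $\sigma,\gamma$ loses no information about the $(\mm,\rho,s)$ dynamics.
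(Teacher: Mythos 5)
Your proposal is correct and follows essentially the same route as the paper: the forward direction is exactly the displayed computation of $\dot f$ for an arbitrary functional after the Legendre transform, and the converse is the standard recovery of the equations of motion by testing against coordinate functionals, which the paper leaves implicit. Your additional checks (Jacobi/Casimir properties of the Lie--Poisson bracket, Kulkarni--Nomizu structure of the dissipative $4$-brackets, and the vanishing of boundary terms for the closed system) are consistent with, and somewhat more explicit than, what the paper records.
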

We have added the equations to recover $\sigma$ and $\gamma$ for completeness, but in practice, one is generally only interested in $\rho, s$ and $\uu$.

We can reduce these 4-brackets to 2-brackets as done previously:
\begin{align*}
(f,g)_{visc} &= (f,h;g,h)_{visc} \\
	&=\int_\Omega \frac{1}{T}\Big(
        \nabla\frac{\delta h}{\delta \mm} : \Lambda : \nabla\frac{\delta h}{\delta \mm}) \frac{\delta f}{\delta s} \frac{\delta g}{\delta s}  
        - (\nabla\frac{\delta h}{\delta \mm} : \Lambda : \nabla\frac{\delta f}{\delta \mm})\frac{\delta h}{\delta s} \frac{\delta g}{\delta s} \\
		& + (\nabla\frac{\delta g}{\delta \mm} : \Lambda : \nabla\frac{\delta f}{\delta \mm}) \frac{\delta h}{\delta s} \frac{\delta h}{\delta s}  
		- (\nabla\frac{\delta g}{\delta \mm} : \Lambda : \nabla\frac{\delta h}{\delta \mm})\frac{\delta f}{\delta s} \frac{\delta h}{\delta s}) 
        \Big) \\
   &=\int_\Omega \frac{1}{T}\Big(
        \nabla\frac{\delta h}{\delta \mm} : \Lambda : \nabla\frac{\delta h}{\delta \mm}) \frac{\delta f}{\delta s} \frac{\delta g}{\delta s}  
        - (\nabla\frac{\delta h}{\delta \mm} : \Lambda : \nabla\frac{\delta f}{\delta \mm}) T \frac{\delta g}{\delta s} \\
		&+ (\nabla\frac{\delta g}{\delta \mm} : \Lambda : \nabla\frac{\delta f}{\delta \mm}) T^2  
		- (\nabla\frac{\delta g}{\delta \mm} : \Lambda : \nabla\frac{\delta h}{\delta \mm})T \frac{\delta f}{\delta s}) 
		\Big) \\
	&= \int_\Omega T \Big( \nabla \frac{\delta f}{\delta \mm} - \frac{1}{T}\nabla \frac{\delta h}{\delta \mm} \frac{\delta f}{\delta s} \Big) : \Lambda :
	\Big( \nabla \frac{\delta g}{\delta \mm} - \frac{1}{T}\nabla \frac{\delta h}{\delta \mm} \frac{\delta g}{\delta s} \Big)        
\end{align*}

\begin{align*}
(f,g)_{heat} &= (f,h;g,h)_{heat} \\
	&= \int_\Omega \frac{\kappa}{T^2}\left(\nabla \frac{\delta h}{\delta s} \cdot \nabla \frac{\delta h}{\delta s} \frac{\delta f}{\delta s} \frac{\delta g}{\delta s} - \nabla \frac{\delta h}{\delta s} \cdot \nabla \frac{\delta f}{\delta s} \frac{\delta h}{\delta s} \frac{\delta g}{\delta s} 
        +\nabla \frac{\delta g}{\delta s} \cdot \nabla \frac{\delta f}{\delta s} \frac{\delta h}{\delta s} \frac{\delta h}{\delta s} - \nabla \frac{\delta g}{\delta s} \cdot \nabla \frac{\delta h}{\delta s} \frac{\delta f}{\delta s} \frac{\delta h}{\delta s}\right) \\
        &= \int_\Omega \frac{\kappa}{T^2}\left(\nabla T\cdot \nabla T \frac{\delta f}{\delta s} \frac{\delta g}{\delta s} - \nabla T \cdot \nabla \frac{\delta f}{\delta s} T \frac{\delta g}{\delta s} 
        +\nabla \frac{\delta g}{\delta s} \cdot \nabla \frac{\delta f}{\delta s} T^2 - \nabla \frac{\delta g}{\delta s} \cdot \nabla T \frac{\delta f}{\delta s} T\right)  \\
        &= \int_\Omega \kappa T^2\left(\frac{\nabla T}{T^2}\cdot \frac{\nabla T}{T^2} \frac{\delta f}{\delta s} \frac{\delta g}{\delta s} - \frac{\nabla T}{T^2} \cdot \nabla \frac{\delta f}{\delta s} \frac{1}{T} \frac{\delta g}{\delta s} 
        +\nabla \frac{\delta g}{\delta s} \cdot \nabla \frac{\delta f}{\delta s} \frac{1}{T^2} - \nabla \frac{\delta g}{\delta s} \cdot \frac{\nabla T}{T^2} \frac{\delta f}{\delta s} \frac{1}{T}\right) \\
        &= \int_\Omega \kappa T^2\left(\nabla \frac{1}{T}\cdot \nabla \frac{1}{T} \frac{\delta f}{\delta s} \frac{\delta g}{\delta s} + \nabla \frac{1}{T} \cdot \nabla \frac{\delta f}{\delta s} \frac{1}{T} \frac{\delta g}{\delta s} 
        +\nabla \frac{\delta g}{\delta s} \cdot \nabla \frac{\delta f}{\delta s} \frac{1}{T^2} + \nabla \frac{\delta g}{\delta s} \cdot \nabla \frac{1}{T} \frac{\delta f}{\delta s} \frac{1}{T}\right) \\
        &= \int_\Omega \kappa T^2 \left(\nabla \frac{1}{T} \frac{\delta f}{\delta s} + \frac{1}{T} \nabla \frac{\delta f}{\delta s} \right) \left(\nabla \frac{1}{T} \frac{\delta g}{\delta s} + \frac{1}{T} \nabla \frac{\delta g}{\delta s} \right)  \\
        &= \int_\Omega \kappa T^2 \nabla \left(\frac{1}{T} \frac{\delta f}{\delta s} \right) \nabla \left( \frac{1}{T} \frac{\delta g}{\delta s} \right) \\
\end{align*}

We recover the metriplectic brackets described in \cite{eldred2020single}.

\section{Systems with no symplectic part}
\label{sec:no_symp}
As mentioned in \cref{sec:simple}, all our previous derivations are not valid when dealing with systems where the Legendre transform $\dot{q} \mapsto p$ is not invertible. We now derive a metriplectic formulation for such systems, where there is no symplectic part. Consider a Lagrangian $L = L(q,S)$, if $S$ is assumed a simple parameter of the model, then the extremality condition reads $\delta L(q) =0$ and there is no dynamic, the system is steady at an equilibrium. However the variational formulation proposed in \cite{gay2017lagrangian1} allows to understand the dynamic of such system imposing a variational condition on the evolution of entropy.
\subsection{Equation of motion}
We consider the following variational problem:
\begin{problem}[Variational formulation for system with no symplectic part]
\label{pbm:variational_form_no_symp}
Find curve $(q(t), S(t)) \in Q \times \R$ that satisfies the variational condition
\begin{equation}
\label{eqn:lagrangian_no_symp}
\delta \int_0^T L(q, S) dt = 0 ~,
\end{equation}
with variation subject to the variational constraint
\begin{equation}
\label{eqn:var_constraint_no_symp}
\frac{\partial L}{\partial S} \delta S = \langle F^{fr}(q,S), \delta q \rangle ~,
\end{equation}
and the phenomenological constraint
\begin{equation}
\label{eqn:pheno_constrain_no_symp}
\frac{\partial L}{\partial S} \dot{S} = \langle F^{fr}(q,S), \dot{q} \rangle ~.
\end{equation}
\end{problem}
We keep assumption \cref{eqn:structure_dissipation} on the dissipation, adding that $\Lambda$ should be invertible, we denote $\Gamma$ its inverse. 
\begin{remark}
    Since $\Lambda$ is symmetric, we can diagonalize it. In the directions with $0$ eigenvalues (let name one of them $q_0$), the variational problem gives $\frac{\delta L}{\delta q_0} = 0$ and no dynamic, therefore we can remove this direction from the problem and work with an invertible matrix.
\end{remark}
The equations given by the variational problem are
\begin{subequations}
    \label{eqn:no_symp_varcond}
    \begin{equation}
        \frac{\delta L}{\delta S} \dot{S} = -\dot{q} \Lambda \dot{q} ~,
    \end{equation}
    \begin{equation}
        \frac{\delta L}{\delta q} = \Lambda \dot{q} ~.
    \end{equation}
\end{subequations}
\subsection{Metriplectic formulation}
Considering the Hamiltonian $H = -L$, we rewrite the equation of motion 
\begin{subequations}
    \label{eqn:EL_no_symp_varcond}
    \begin{equation}
        \dot{S} = \frac{1}{T} \frac{\delta H}{\delta q} \Gamma \frac{\delta H}{\delta q} ~,
    \end{equation}
    \begin{equation}
        \dot{q}  = -\Gamma \frac{\delta H}{\delta q} ~.
    \end{equation}
\end{subequations}
Consider a functional $F(q,S)$ :
\begin{align*}
    \dot{F} &= \frac{\delta F}{\delta q} \dot{q} + \frac{\delta F}{\delta S} \dot{S} \\
    &= \frac{1}{T} \frac{\delta H}{\delta q} \Gamma \frac{\delta H}{\delta q} \frac{\delta F}{\delta S} - \frac{\delta F}{\delta q} \Gamma \frac{\delta H}{\delta q} \\
    &= \frac{1}{T} ( \frac{\delta H}{\delta q} \Gamma \frac{\delta H}{\delta q} \frac{\delta F}{\delta S} \frac{\delta S}{\delta S} - \frac{\delta F}{\delta q} \Gamma \frac{\delta H}{\delta q} \frac{\delta H}{\delta S} \frac{\delta S}{\delta S} + \frac{\delta S}{\delta q} \Gamma \frac{\delta F}{\delta q} \frac{\delta H}{\delta S} \frac{\delta H}{\delta S} - \frac{\delta S}{\delta q} \Gamma \frac{\delta H}{\delta q} \frac{\delta F}{\delta S} \frac{\delta H}{\delta S}) \\
    &= (F,H;S,H) ~,
\end{align*}
with 
\begin{equation}
    \label{eqn:metri_bracket_no_symp}
    (F,G;M,N) = \frac{1}{T} ( \frac{\delta N}{\delta q} \Gamma \frac{\delta G}{\delta q} \frac{\delta F}{\delta S} \frac{\delta M}{\delta S} - \frac{\delta F}{\delta q} \Gamma \frac{\delta N}{\delta q} \frac{\delta G}{\delta S} \frac{\delta M}{\delta S} + \frac{\delta M}{\delta q} \Gamma \frac{\delta F}{\delta q} \frac{\delta G}{\delta S} \frac{\delta N}{\delta S} - \frac{\delta M}{\delta q} \Gamma \frac{\delta G}{\delta q} \frac{\delta F}{\delta S} \frac{\delta N}{\delta S}) ~,
\end{equation}
therefore this equation can be seen as a pure dissipative metriplectic motion.
\begin{theorem}
A curve $(q(t), S(t)) \in Q \times \R$ is solution to \cref{pbm:variational_form_no_symp} if and only if it is a metriplectic system \ref{def:metriplectic} with the bracket defined by \cref{eqn:metri_bracket_no_symp}, no symplectic bracket and the Hamiltonian $H = -L$.
\end{theorem}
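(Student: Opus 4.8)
The statement is an equivalence, and both implications will follow from the chain of equalities displayed just above the theorem, once we add a single preliminary verification: that the object defined in \cref{eqn:metri_bracket_no_symp} genuinely satisfies the axioms of a metriplectic 4-bracket from \cref{def:metriplectic}. So the plan is: (i) check the bracket; (ii) prove the forward implication by feeding the variational equations into the displayed computation; (iii) prove the converse by evaluating the metriplectic law on coordinate functionals.

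For step (i), I would observe that \cref{eqn:metri_bracket_no_symp} is precisely $\tfrac1T$ times the Kulkarni--Nomizu product $a\owedge b$ of the two symmetric bilinear brackets $a(F,G) = \langle \tfrac{\delta F}{\delta q}, \Gamma \tfrac{\delta G}{\delta q}\rangle$ and $b(F,G) = \tfrac{\delta F}{\delta S}\tfrac{\delta G}{\delta S}$, in the sense of \cref{eqn:KNproduct} (the matching of the four signed terms is a direct comparison). Symmetry of $a$ holds because $\Gamma = \Lambda^{-1}$ is symmetric, $\Lambda$ being symmetric and invertible --- the invertibility is legitimate after the reduction described in the remark following \cref{pbm:variational_form_no_symp}, which discards the kernel directions on which the variational problem produces no dynamics. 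Since the Kulkarni--Nomizu product of two symmetric $2$-tensors automatically enjoys the Riemann-type symmetries \cite{kulkarni1972bianchi,morrison2024inclusive}, the identities $(F,G;M,N) = -(G,F;M,N) = -(F,G;N,M) = (M,N;F,G)$ hold, and bilinearity in each slot is immediate. The remaining hypothesis of \cref{def:metriplectic}, that $\{F,S\}=0$ for all $F$, is vacuous here because there is no symplectic bracket.

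For step (ii), suppose the curve $(q(t),S(t))$ solves \cref{pbm:variational_form_no_symp}; then it satisfies \cref{eqn:no_symp_varcond}, and substituting $H=-L$, $T = \tfrac{\partial H}{\partial S}$ and $\Gamma = \Lambda^{-1}$ recasts these as \cref{eqn:EL_no_symp_varcond}. The displayed computation preceding the theorem uses only $\dot F = \tfrac{\delta F}{\delta q}\dot q + \tfrac{\delta F}{\delta S}\dot S$, the equations \cref{eqn:EL_no_symp_varcond}, and the trivial identities $\tfrac{\delta S}{\delta q}=0$, $\tfrac{\delta S}{\delta S}=1$; it therefore yields $\dot F = (F,H;S,H)$ for every functional $F$, which by step (i) is exactly a metriplectic system with no symplectic bracket, Hamiltonian $H=-L$ and entropy generator $S$.

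For step (iii), assume $\dot F = (F,H;S,H)$ holds for all $F$. Evaluating on the coordinate functionals $F = q^i$ (so $\tfrac{\delta F}{\delta q}=e_i$, $\tfrac{\delta F}{\delta S}=0$) annihilates every term of \cref{eqn:metri_bracket_no_symp} containing $\tfrac{\delta S}{\delta q}$ or $\tfrac{\delta F}{\delta S}$ and leaves $\dot q^i = -(\Gamma \tfrac{\delta H}{\delta q})_i$; evaluating on $F = S$ leaves $\dot S = \tfrac1T\langle\tfrac{\delta H}{\delta q},\Gamma\tfrac{\delta H}{\delta q}\rangle$. These are \cref{eqn:EL_no_symp_varcond}, hence, reading the correspondence $H=-L$, $\Gamma=\Lambda^{-1}$ backwards, they are \cref{eqn:no_symp_varcond}, i.e.\ the curve solves \cref{pbm:variational_form_no_symp}. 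I do not expect a serious obstacle: the argument is essentially bookkeeping, and the only point demanding care is that the passage $L\leftrightarrow H$, $\Lambda\leftrightarrow\Gamma$ be a genuine bijective change of data so that the two systems of equations are literally equivalent --- which is precisely why invertibility of $\Lambda$ (from Assumption~\ref{assump:structure_dissipation}, after restricting to the non-degenerate directions) is assumed throughout this section.
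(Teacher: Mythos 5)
Your proposal is correct and follows essentially the same route as the paper: the paper's proof of this theorem is precisely the displayed chain of equalities rewriting $\dot F$ via \cref{eqn:EL_no_symp_varcond} into $(F,H;S,H)$, which is your step (ii). Your additional verifications --- that \cref{eqn:metri_bracket_no_symp} is $\tfrac1T$ times the Kulkarni--Nomizu product of $\langle\tfrac{\delta F}{\delta q},\Gamma\tfrac{\delta G}{\delta q}\rangle$ and $\tfrac{\delta F}{\delta S}\tfrac{\delta G}{\delta S}$ (hence has the required symmetries), and the explicit converse via coordinate functionals --- are accurate and merely make explicit what the paper leaves implicit.
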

\subsection{Example: Simple system with chemical reactions}
We consider a system of several components undergoing $r$ chemical reactions. The state of the system is given by the generalized coordinates $(\psi_i)_{i=1,...,r}$, representing the degree of advancement of each reaction. We introduce the internal energy of the system $U(\psi_1,...,\psi_r,S)$ as a function of the degrees of advancement and the entropy $S$. The Lagrangian of the system is 
\begin{equation}
\label{eqn:lag_chemical}
L(\psi_i,S) = -U(\psi_i,S) ~.
\end{equation}
We suppose that each reaction is subject to a dissipation, related to the advancement of the other reactions 
\begin{equation}
\label{eqn:friction_chemical}
F^{fr(i)} = - \sum\limits_{j=1}^r \lambda_{ij} \dot{\psi}_j ~,
\end{equation}
where the matrix $(\lambda_{ij})$ is symmetric definite positive. As previously stated we can now define our Hamiltonian
\begin{equation}
\label{eqn:Ham_bracket_chemical}
H = U(\psi_i,S) ~.
\end{equation}
And the bracket defined by \cref{eqn:metri_bracket_no_symp}. Considering the dynamics $\dot{F} = (F,H;S,H)$ we recover the equations:
\begin{equation}
\label{eqn:dynamic_chemical}
\dot{S} = \frac{1}{T}(\frac{\partial U}{\partial \psi} \Gamma \frac{\partial U}{\partial \psi}) \qquad \dot{q_i} = -\sum\limits_j \Gamma_{ij} \frac{\partial U}{\partial \psi_j} ~,
\end{equation}
giving indeed the evolution of a chemical system as described in \cite{gay2017lagrangian1}.

\section{Conclusion and perspective}
\label{sec:concl}
In this work we presented a metriplectic reformulation of the variational formulation for non-equilibrium thermodynamics. This reformulation has been done by studying the evolution of observable of the system and rewriting it using canonical brackets and metriplectic 4-brackets. We observed that the latter has particular Kulkarni-Nomizu product structure which might provide some geometrical interpretation of this principle. This formulation is valid for a broad class of systems and is extendable to more intricate system as has being shown with the thermodynamical Navier-Stokes equation. 

Future work will focus on applying this framework to new systems in order to provide a metriplectic description of more complicated settings, and using this equivalence to construct new structure preserving algorithms.

\section*{Acknowledgements}
The author thanks Martin Campos-Pinto and Eric Sonnendrücker for their help during the redaction of this manuscript. Inspiring discussion with Philip Morrison, William Barham, Michael Kraus and Omar Maj are also warmly acknowledged. The author is also thankful to Francois Gay-Balmaz for his feedback during the redaction of this work.
\bibliographystyle{plain} 
\bibliography{var_MHD}

\end{document}